\newtheorem{theorem}{Theorem}
\newtheorem{lemma}{Lemma}
\newtheorem{corollary}{Corollary}
\def\eps{\varepsilon}
\def\P{{\mathbb P}}  
\def\E{{\mathbb E}}  
\def\S{{\cal S}}
\def\A{{\cal A}}
\def\D{{\cal D}}
\def\B{{\cal B}}
\def\W{{\mathcal W}}
\def\M{{\mathcal M}}
\def\X{{\mathcal X}}
\def\Y{{\mathcal Y}}
\def\K{{\mathcal K}}
\def\INTGR{{\mathds Z}}   
\def\mx{{$(\min, \times)$}}
\newcommand{\Kd}{\mathcal{K}}
\newcommand{\EE}{\mathbb{E}}
\begin{document}

\title{Transient Delay Bounds for Multi-Hop Wireless Networks}
\author{ Jaya Prakash Champati, Hussein Al-Zubaidy, James Gross\\ 
         School of Electrical Engineering and Computer Science, KTH Royal Institute of Technology, Stockholm, Sweden. \\
        E-mail: \{jpra,hzubaidy\}@kth.se, james.gross@ee.kth.se.
        }%


\maketitle

\begin{abstract}
In this article, we investigate the transient behavior of a sequence of packets/bits traversing a multi-hop wireless network.
Our work is motivated by novel applications from the domain of process automation, Machine-Type Communication (MTC) and cyber-physical systems, where short messages are communicated and statistical guarantees need to be provided on a per-message level.
In order to optimize such a network, apart from understanding the stationary system dynamics, an understanding of the short-term dynamics (i.e., transient behavior) is also required.
To this end, we derive novel Wireless Transient Bounds (WTB) for end-to-end delay and backlog in a multi-hop wireless network using stochastic network calculus approach. 
WTB depends on the initial backlog at each node as well as the instantaneous channel states.     
We numerically compare WTB with State-Of-The-Art Transient bounds (SOTAT), that can be obtained  by adapting existing stationary bounds, as well as simulation of the network. 
While SOTAT and stationary bounds are not able to capture the short-term system dynamics well, WTB provides relatively tight upper bound and has a decay rate that closely matches the simulation.
This is achieved by WTB only with a slight increase in the computational complexity, by a factor of $O(T+N)$, where $T$ is the duration of the arriving sequence and $N$ is the number of hops in the network. We believe that the presented analysis and the bounds can be used as base for future work on transient network optimization, e.g., in massive MTC, critical MTC, edge computing and autonomous vehicle.
\end{abstract}

\vspace{2mm}
\begin{IEEEkeywords}
Transient analysis; machine type communication; stochastic network calculus; time-critical networks; wireless
\end{IEEEkeywords}

 
\section{Introduction}
With the advent of new applications from the automation domain, it is commonly accepted that wireless networks are facing significant design challenges with respect to new quality-of-service demands.
In contrast to wireless networks optimized for human-related applications (like voice or mobile phone apps), there is a much stronger emphasis on strict reliability guarantees with respect to rather short deadlines that need to be fulfilled during the operation of the network.
This is in particular true for Ultra Reliable Low Latency Communication (URLLC) applications proposed for fifth generation (5G) cellular networks.
Such applications typically have reliability requirements with respect to acceptable packet error rates in the order of $10^{-9}$ \cite{Yilmaz-EricssonResearch}, while the end-to-end delays may not exceed a few milliseconds \cite{Dahlman-EricssonResearch}.
These requirements are in contrast to typical latency/reliability requirements of human-related applications; for example, voice applications require an end-to-end delay of about one hundred milliseconds.

To fulfill such strict requirements, one fundamental challenge is the development of network models, and subsequent network optimization algorithms and protocols, that target end-to-end performance over very short time spans.
Of particular interest is the end-to-end delay of a given sequence of bits/packets transported over a multi-hop route as it is closely related to the performance and/or safety of a control application that the data belongs to.
For instance, consider an event-based closed-loop control where a packet with a new event is instantaneously generated.
It is well known that latency in delivery of the packet caused due to underlying communication system  has a significant impact on the control performance, i.e. the stability of the controlled plant.
Likewise, safety-constrained control systems demand a periodic successful transmission of so-called `keep alive' packets to validate the system conditions, otherwise they transit into fail-safe state \cite{SafetyRequirements}.
In both cases, the involved  latency requirements can be quite small, emphasizing the necessity for a precise understanding of the short-term end-to-end communication performance.
In other words, the question thus arises on a per packet/message level rather than the traditional per flow level - with what likelihood a packet will be received (possibly in a multi-hop setting) given a deadline by the control application?

\vspace{.1cm}
Given a model that facilitates the analysis of the end-to-end delay on a packet level one could potentially design algorithms that instantaneously redistribute network resources to better accommodate, for instance, the given safety requirements.
However, achieving this rests on understanding the short-term stochastic fluctuations of the latency of a given route. 
To this end, two aspects need to be taken into account: 
(i) the instantaneous backlog which influences the end-to-end performance of a newly injected, time--critical packet sequence, and 
(ii) the short-term variability of the upcoming wireless service that results in random service increments. 
In combination, these two aspects call for approaches that can account for the queuing  as well as fading channel fluctuation effects as a starting point for any further network optimization/management tasks. {\color{black} To this end, we consider a new queueing-theoretic model where the queues have non-zero initial backlog and random service processes, and study its short-term behaviour.} 


Queuing systems operate in one of two states, \textit{transient state} or \textit{steady state}. 
It is well known that under certain conditions - for instance, stationarity of the arrival and service process as well as system stability - a queuing system, starting initially in transient state, reaches its steady state after some elapsed time.
The steady state is characterized by stationarity of the metrics of the system such as the virtual delay/sojourn time as well as the queue length.
In contrast, if these metrics are not stationary (yet), i.e. the distribution of the virtual delay for instance changes from time slot to time slot, then  the system is still operating in transient state.
Keeping this in mind, 
our focus is on the analysis of the short-term virtual delay of a tandem of queueing systems given the initial backlog of the system.
Due to the dependency of the virtual delay on the initial state and the rather short time spans that we are interested in, the virtual delay attains a non-stationary distribution, which motivates us to refer to our analysis in the following as \textit{transient} analysis.


The literature on transient network analysis is generally sparse compared to the queuing-theoretic literature body on stationary/steady state metrics.
This is due to the fact that transient queuing analysis pursued so far quickly becomes intractable.
For example, while for simple M/M/1 or more general Markovian queuing systems the steady state is governed by the (conceptually simple) flow balance equations, in case of transient analysis the involved differential equations lead to intractable expressions even for M/M/1 systems~\cite{Book:morse}.
Subsequently, either approximations or numerical methods for the transient analysis have been proposed~\cite{Zhang_91,Matis_01,Czachorski_15}.
Furthermore, despite the relevance of transient analysis for communication networks, it has received little attention when analyzing practically relevant networking effects. 
One exception is~\cite{Mellia_02}, where   the selection of the TCP congestion window is studied by applying transient analysis for flows of short lengths. 
Through a simple recursive formula for the average completion time of the flow transmission, the authors showed a significant impact of different window settings.
Nevertheless, their model does not account for queuing effects along the route, among many other aspects. 
A second application example is the analysis of ATM networks~\cite{wang_96}, where   a transient analysis based on an extension of Petri nets is presented.
While demonstrating a very strong aspect of transient analysis in general - for example, the ability to characterize practically relevant overload situations (which cannot be dealt with using steady state analysis) - the presented approach nevertheless rests heavily on numerical methods that limit the analytical insight. Moreover, incorporating service elements based on the fading  distribution of wireless channel in these models pose a significant challenge for their tractability.

Alternative approaches to stationary (wireless) network performance analysis comprise effective capacity and stochastic network calculus.
Effective capacity was initially devised to provide asymptotic delay and backlog performance, i.e., for values of delay/backlog going to infinity, for a Rayleigh fading channel~\cite{wu}. 
The approach was later tailored towards analyzing stationary performance metrics, but can only provide asymptotic results which renders it useless for time-critical network analysis. 
Stochastic network calculus~\cite{jiang,fidler_mgf,AlZubaidyTON}, on the other hand, has been employed in the literature to study the (non-asymptotic) stationary performance
of wireless networks; see~\cite{alzubaidyInfocom13,itc14,icc15,jiang_markov,florin_infocom,florin_capacity,fidler_globecom}. 
Nevertheless, \textit{deriving performance bounds for transient behavior of wireless networks} has not been attempted before. 
One exception is the work by Becker and Fidler~\cite{Becker-Fidler-ITC2015}, where they analyzed the effects of transient phases on delay and backlog for sleep scheduling in wireless networks by proposing non-stationary service curves. However, they numerically evaluate the probabilistic bounds for specific service processes, and do not consider the effect of initial backlog in the system.

In this work, we strive to establish transient network performance analysis, in particular, for wireless networking.
To this end, we propose an analytic model based on the stochastic network calculus framework, and study different bounding techniques that allow an accurate characterization of the probabilistic delay of a multi-hop wireless network in transient settings. 
Note that this model must capture the statistical behavior of the fading channel as well as the queuing effect at every intermediate node. 
This distinguishes our proposed model from most communication theoretical models that emphasize channel characterization and ignore queuing effects, and from traditional queuing models that abstract the physical channel behavior, e.g., by using  Markov channel models.

The main contribution of this work lies in deriving new performance bounds for single- and multi-hop transmission scenarios that can be used to characterize the probabilistic end-to-end delay of a mission-critical packet/message transmission. The proposed bounds do not rely on any restriction on the arrival process and is applicable to i.i.d. wireless channels with general distribution. Also, they are shown to be significantly tighter (numerically) compared with the available stationary bounds in the literature. This is achieved by utilizing information pertaining to the short-term behavior of the system caused by abruptly arriving messages, e.g., control messages in process automation,
 traversing the wireless network. Further, we derive probabilistic bounds for the backlog in the system, and present a method to extend the bounds for the case where initial backlog information is delayed.

The rest of the paper is organized as follows. 
In Section~\ref{sec:model}, we describe the network model and state basic assumptions. 
In Section~\ref{sec:Preliminaries}, we provide some background for the problem and the used methodology.  
In Section~\ref{sec:TransientBound}, we present our main contribution, namely the derivation of novel performance bounds for transient network operation. 
In Section~\ref{sec:numerical}, we present numerical results to evaluate the proposed bounds and compare them to simulation results for multi-hop wireless networks. 
We conclude in Section~\ref{sec:conclusions}.

\section{Network Model} 
\label{sec:model}
 
\begin{figure}
\centering
\includegraphics[scale=.45]{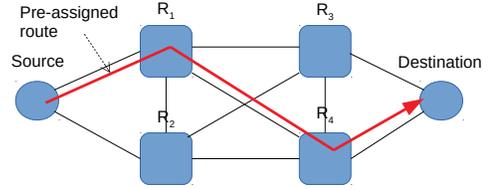}
\caption{Example of wireless network topology.}
\label{fig:NetTopology}
\vspace{-.3cm}
\end{figure}

We investigate a wireless network composed of a source, a destination and multiple relay nodes in a mesh topology; see for example Figure \ref{fig:NetTopology}. 
Delay-critical packets (generated by a control application) are exchanged between the source and the destination  and are constrained by a hard end-to-end deadline $w$.
Each pair of nodes are connected via point-to-point fading channel with link data rate equivalent to the channel capacity. 
We assume a static routing that pre-assigns a route for the flow of interest between the source and destination. 
The network topology then is modeled as an undirected graph, where the set of vertices represent the nodes in the network and the set of edges represent the physical links between the nodes.  
A route is an ordered set of links connecting the source and the destination as shown by the red line in Figure \ref{fig:NetTopology}. 
Our goal is to develop a model that facilitates end-to-end delay analysis on a per-message basis for the time-critical data over a given route through the graph.
Towards that end, we next introduce a more formal model for the multi-hop route considering in particular the tandem of queuing systems.
Afterwards, we give a precise problem statement.

\subsection{Queuing Model}
We consider a fluid-flow, discrete-time queuing model for a multi-hop wireless network shown in Figure~\ref{fig:SysDesc}. 
A data flow traverses the $N$-hop wireless links. 
We are interested in studying the performance of  message abruptly arriving to this network at time $t_0$ and lasting for  $T$ time slots, where $t_0 \geq 0$ is any arbitrary time. 
This message may represent a sequence of time-critical data bits (or packets) arriving at time $t_0$ and lasting for $T$ time slots. 
At time $t$, we denote the buffer state of the network by $\mathbf{x}(t)\in \INTGR_+^N $, where $x_n(t)$ is the  backlog of wireless link $n$ at time $t$,  $n \in \{1,2,\ldots,N\}$ and $\INTGR_+$ is the set of positive integers. To simplify notation, we designate the initial backlog at time $t_0$ as $\mathbf{x}(t_0)\equiv  \mathbf{x}$. 
Given $\mathbf{x}$, we ignore the arrivals before time $t_0$ and simply consider that the system started at time $t_0$ with initial backlog $\mathbf{x}$. 
Later, in Section~\ref{subsec:backlog} we also consider the case where $\mathbf{x}$ is not given, but delayed backlog information is available, e.g., through measurements at time $t_0 - d$ for $0\le d \le t_0$.

The arrivals of our interest is described by the cumulative arrival process $A(t_0,t)$, $0\le t_0 \le t$ where  $A(u,t)= \sum_{i=u}^{t-1} a_i$, for all $  t_0\le u \le t$, $a_i$ is the arrival increment during time slot $i$ and $a_i =0$, for all $i\notin [t_0, t_0 + T)$. After being served by the system, the arrivals result in a departure process $D(t_0,t)$.
Given $t_0$, we define $A(t) = A(t_0,t)$ and $D(t) = D(t_0,t)$. {\color{black} In this work, we do not impose any restriction on $A(t)$ - arrival increments can take independent and arbitrary values - in deriving the bounds. We also present expressions for the case where $A(t)$ obeys the $(\sigma,\rho)$-bounded traffic characterization introduced by Cruz~\cite{Cruz1991} --  a typical assumption for deriving bounds in the network calculus literature. Under $(\sigma,\rho)$-bounded traffic characterization,
\begin{equation}\label{eq:boundedarrival}
A(t) - A(u) \leq \rho(t-u) + \sigma, \; \forall \; t_0 \leq u \leq t,
\end{equation}
for some $\sigma \geq 0$ and $\rho \geq 0$.}

The cumulative service provided by the $n^{\rm th}$ wireless link is given by 
\begin{equation}\label{eq:Service-Process}
S_n(u,t) = \sum_{i = u}^{t-1} s_{n,i},
\end{equation} 
where $s_{n,i}$ is the capacity of the $n^{\rm th}$ wireless link during time slot $i$.
We assume that the service processes are i.i.d. both across links and time slots. {\color{black} We consider general distribution for the service processes and derive the results, but for the purposes of numerical evaluation and simulation we use Rayliegh-block fading channel model.} 
We assume first-come first-serve discipline for the designated arrival sequence at each store-and-forward node along the path. 


\begin{figure}
\centering
\includegraphics[scale=.45]{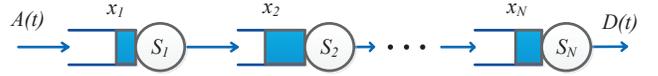}
\caption{ Multi-hop wireless network queuing model.}
\label{fig:SysDesc}
\end{figure}

The total backlog $B(t)$ and the end-to-end virtual delay $W(t)$ of the queuing network described above is then given by 
\begin{equation}\label{eq:backlog}
B(t) = A(t) +\sum_{n=1}^N x_n - D(t),
\end{equation}  
and 
\begin{equation} \label{eq:delay}
W(t) = \inf \left \{w \ge 0 : A(t) +\sum_{n=1}^N x_n \le D(t+w)\right \}.
\end{equation} 


\subsection{Problem Statement}
As mentioned earlier, the arriving traffic of interest $A(t)$ is constrained by the hard deadline $w$, i.e. data transmitted at time slot $t$ must be received before $t + w + 1$.
In order to efficiently support such constrained traffic $A(t)$, we strive for an analysis of the \textit{deadline violation probability} of the virtual delay $W(t)$ for $t \in \left[t_0,t_0+T\right]$, i.e. an analysis of $\P(W(t) > w)$ for the time-critical data as it traverses the multi-hop wireless network described above under the crucial assumption of the initial backlog $\mathbf{x}$ along the route.

It is well known to the queueing-theoretic community that under certain conditions (typically referred to as stability criteria) the stochastic properties of the metrics such as virtual delay/sojourn time settle, turning it into a stationary random process \cite{Book:chang}.
This state of the queueing system is also referred to as steady state. In contrast, the case where the stochastic properties have not settled yet is commonly referred to as transient state.
In this spirit, we refer to the analysis goal discussed above as transient analysis, as the stochastic properties of the virtual delay of interest have not settled yet.
As we will see, our main contributions lie in the derivation of new bounds for various cases, and consequently we will refer to them also as \textit{transient bounds}.
Alternatively, one might also consider the corresponding stationary case, for which state of the art readily provides so called \textit{stationary bounds}.
However, these stationary bounds provide ample room for improvement, resulting from the short-term aspect of the virtual delay of interest as well as the fact that the initial backlog is known.
Finally, for convenience of exposition we define $\tau = t + w$ and $x_\text{max} = \max\limits_{n\in \{ 1, \dots, N\} } x_n$.
\section{Metodology and Existing Results}\label{sec:Preliminaries}
The transient analysis of network performance provides a better understanding of the quality of service requirements (and hence, an efficient delivery) of \emph{spontaneous traffic} traversing multi-hop data network. 
This analysis is particularly important for mission-critical communications over wireless networks, arising from scenarios such as industrial IoT, cyber-physical systems or vehicular applications, which all require the timely delivery of information with high reliability. 
Hence, the problem at hand is to evaluate the transient performance (in terms of end-to-end delay and reliability) of mission-critical traffic traversing a multi-hop wireless network.  
One possible approach for the performance analysis of such networks is based on stochastic network calculus theory \cite{jiang,fidler_mgf}  and its recently reported application to wireless networks analysis~\cite{AlZubaidyTON}. 
The key benefit from using stochastic network calculus is the ability to extend single hop results to multi-hop settings with reasonable efforts. 
Furthermore, the described approach provides closed-form expressions in terms of the physical attributes of the wireless fading channel. 
Nevertheless, network calculus in general provides bounds rather than exact results, which is a necessary compromise to achieve tractability.
Note that in the context of mission-critical transmissions upper bounds on the transmission reliability with respect to a given latency target are generally acceptable for network design and management purposes.

In the literature, there are several approaches, based on stochastic network calculus, that can handle the performance analysis of wireless networks. 
These approaches range from computing the effective capacity of such channels \cite{wu} to computing the MGF of a Markov process abstraction of the wireless fading channel \cite{fidler_globecom} and ON-OFF service characterization of slotted Aloha access over shared wireless channel  \cite{florin_capacity}. 
A more recent approach, namely \mx~network calculus, that provides end-to-end performance characterization of wireless networks in terms of the fading channel physical parameters, i.e., fading distribution and average SNR, is developed in \cite{AlZubaidyTON} and is based on \mx~dioid algebra. 
In this paper, we pursue the transient analysis of wireless systems by utilizing the \mx~network calculus, while we note that in principle this could also be pursued for example by MGF-based network calculus.
  
\subsection{Equivalent Model}  
In order to analyze the network in Figure~\ref{fig:SysDesc} using stochastic network calculus approach, we must first overcome the following incompatibility: 
In the network calculus framework it is assumed that initially all buffers are empty and no arrivals (from the considered traffic stream) has happened before the start time $t_0$, i.e., $\sum_{n=1}^N x_n =0$ and $A(0,t_0) = 0$. 
Furthermore, it is assumed that no service is rendered by time $t_0$, i.e.,  $S(0,t_0) = 0$. 
To this end, we define an alternative, yet equivalent, queuing model for our system shown in Figure~\ref{fig:SysModel}. 
We treat the initial backlog $x_n$ at link $n$ as cross-traffic, $A^c_n(t)$, given by
\begin{align}\label{eq:InitBacklog}
A^c_n(t)= \min (\kappa(t-t_0) , x_n),
\end{align}
where $\kappa(t)$ is a burst function with $\kappa(t)= 0$, for $t = 0$, and $\kappa(t) = \infty$, otherwise. The devised model satisfies the requirements  for a network-calculus-based analysis. 

For the ease of exposition we use $A_n(t)$ and $D_n(t)$ to denote the cumulative arrivals and cumulative departures, respectively, at link $n$. Note that $ A_1(t) = A(t) + A^c_1(t)$, 
\begin{equation}
A_n(t) = D_{n-1}(t) + A^c_n(t), \; \forall n \in \{ 2,\dots, N\}
\end{equation}
and $D(t) = D_N(t)$. 

Given the equivalent model, without loss of generality, we let $t_0 = 0$ in the rest of the paper. 
In this case, we are interested in the network performance during the period $0 \le t \le T$.

\begin{figure}
\centering
\includegraphics[width = 3.4in]{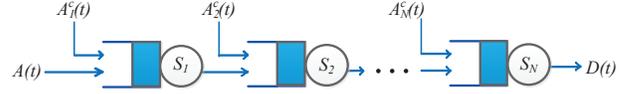}
\caption{Equivalent model.}
\label{fig:SysModel}
\end{figure}

\subsection{\mx~Network Calculus for Wireless Network Analysis}
\label{sec:Min_x_algebra}
The main objective of \mx~network calculus is to obtain probabilistic performance bounds  for multi-hop wireless networks in terms of the underlying fading channel parameters. 
A key concept of the \mx~network calculus is the transformation of the system model into an alternative analysis domain, known as \textit{SNR domain},  using the exponential function. 
In this domain, the random service rendered by a wireless fading channel is characterized in terms of the variability of the instantaneous SNR, that is the SNR service process at wireless link $n$ is given by
\begin{equation}\label{eq:SNR-Service-Process}
\S_n(u,t) = e^{S_n(u,t)} = \prod_{i = u}^{t-1}  e^{s_{n,i}} \, ,
\end{equation}  
where we use the calligraphic font to represent corresponding processes in the SNR domain.
Similarly, the cumulative arrivals and departures in the SNR domain are given by
\begin{align*}
& \A_n(u,t) = e^{A_n(u,t)} = \D_{n-1}(u,t) \cdot \A^c_n(u,t), 
\end{align*} 
where
\begin{align*}
&  \A^c_n(u,t) = \min(e^{\kappa(t-u)},e^{x_n}),  \text{ and } \, \D_n(u,t) = e^{D_n(u,t)}\, .
\end{align*} 
 Then using \eqref{eq:backlog}, the SNR backlog process is described by 
\begin{equation*}
\B(t) = e^{B(t)} = \frac{\A(t)}{\D(t)} \cdot \prod_{n=1}^{N} e^{x_n}.
\end{equation*}
However, the transformation does not affect time. Therefore the delay in the SNR domain is given by 
\begin{equation*}
\W(t) = W(t) = \inf \left \{w \ge 0 : \A(t) \cdot \prod_{n=1}^N e^{x_n} \le \D(t+w)\right \}.
\end{equation*}

The equivalent input/output relationship in \mx-algebra is given by $\D(0,t) \geq \A \otimes \S \left(0,t\right)$, where $\otimes$ is the \mx-convolution operator defined as
\begin{equation*}
 \X \otimes \Y \left(u,t\right) = \inf_{u \leq v \leq t} \left\{ \X(u,v)\cdot \Y (v,t) \right\} \, .
 \label{eq:convolution}
 \end{equation*}

Performance analysis of communication networks  often focuses on a stochastic characterization of virtual delay.
As shown in~\cite{AlZubaidyTON} an upper bound for the delay violation probability can be derived in terms of an integral transform, namely, the Mellin transform of the cumulative arrival and service processes in the SNR domain and by using the moment bound. The Mellin transform of a non-negative random process $\X$ is defined as
\begin{equation}
\mathcal{M}_{\mathcal{X}}\left(s,u,t\right) = \mathcal{M}_{\mathcal{X}\left(u,t\right)} \left(s\right) = \EE\left[\mathcal{X}^{s-1}\left(u,t\right)\right],
\label{eq:Mellin_Definition}
\end{equation}
for any $s \in \mathbb{R}$, whenever the expectation $\E[\cdot]$ exists. 
We restate some relevant results from \cite{AlZubaidyTON} in the following theorem.
\begin{theorem}\label{thm:SOTABound}
A probabilistic bound for the virtual delay at time $t$ is given by $\P (\W(t) > w^\eps) \leq \eps$, where $w^\eps$ is the smallest $w\ge 0$ that satisfies
\begin{equation}\label{eq:delay-bound}
\inf_{s>0} \left\lbrace \K(s,\tau,t) \right\rbrace \le \eps \, ,
\end{equation}
where $\tau = t +w$ and
 \begin{equation}
\label{eq:function_M_Hussein}
  \Kd(s, \tau, t) = \sum_{u=0}^{t} \mathcal{M}_{\mathcal{A}}(1+s,u,t) \cdot \mathcal{M}_{\mathcal{S}}(1-s,u,\tau)\, .
\end{equation}
Furthermore, a probabilistic bound for the stationary virtual delay  $\P (\W > w^\eps) \leq \eps$ of the system is obtained likewise by considering the limit of the function $\K$ as $t \to \infty$.
\end{theorem}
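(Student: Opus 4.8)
The plan is to recap the $(\min,\times)$-network-calculus argument of~\cite{AlZubaidyTON}, since Theorem~\ref{thm:SOTABound} restates a result from there. I would work with the SNR-domain arrival and service processes $\A$ and $\S$ of the equivalent model of Figure~\ref{fig:SysModel}, for which the input/output relation $\D(0,\tau)\ge\A\otimes\S(0,\tau)$ holds; setting up this relation for the multi-hop system with non-zero initial backlog --- i.e., folding the burst cross-traffic $A^c_n$ into a leftover end-to-end service, so that the $\prod_n e^{x_n}$ term is absorbed into $\A$/$\S$ --- is the content of the equivalent model and is taken as given here. From the delay definition in the SNR domain, the violation event $\{\W(t)>w\}$, with $\tau=t+w$, is contained in $\{\A(t)>\D(\tau)\}$, since $\A(t)\le\D(\tau)$ would make $w$ feasible in the infimum defining $\W(t)$.

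Next, I would expand the convolution as $\D(\tau)\ge\inf_{0\le u\le\tau}\{\A(u)\,\S(u,\tau)\}$ and take a union bound, getting $\P(\W(t)>w)\le\sum_{u=0}^{\tau}\P(\A(u)\,\S(u,\tau)<\A(t))$. The summands with $u>t$ are empty: the arrival process is non-decreasing, so $\A(u)\ge\A(t)$, and every service increment satisfies $e^{s_{n,i}}\ge1$, so $\S(u,\tau)\ge1$, whence $\A(u)\,\S(u,\tau)\ge\A(t)$. Using the multiplicative split $\A(0,t)=\A(0,u)\cdot\A(u,t)$ and dividing by $\A(0,u)>0$, what remains is $\P(\W(t)>w)\le\sum_{u=0}^{t}\P(\A(u,t)>\S(u,\tau))$.

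I would then bound each term by the moment (Chernoff-type) bound: for every $s>0$, $\P(\A(u,t)>\S(u,\tau))=\P\bigl((\A(u,t)/\S(u,\tau))^{s}>1\bigr)\le\EE[\A(u,t)^{s}]\,\EE[\S(u,\tau)^{-s}]$, where the factorization uses independence of the arrival process and the wireless service (as assumed in Section~\ref{sec:model}). By the definition~\eqref{eq:Mellin_Definition} this equals $\M_{\A}(1+s,u,t)\,\M_{\S}(1-s,u,\tau)$, so summing over $u$ gives $\P(\W(t)>w)\le\K(s,\tau,t)$ for all $s>0$, and hence $\P(\W(t)>w)\le\inf_{s>0}\K(s,\tau,t)$. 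Since $\S(u,\tau)$ is non-decreasing in $\tau$, $\K(s,\tau,t)$ is non-increasing in $w$, so the smallest $w$ with $\inf_{s>0}\K(s,\tau,t)\le\eps$ is well defined; calling it $w^{\eps}$ yields $\P(\W(t)>w^{\eps})\le\eps$.

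For the stationary statement I would let $t\to\infty$. By stationarity of the increments, $\M_{\A}(1+s,u,t)$ depends only on $t-u$ and $\M_{\S}(1-s,u,\tau)$ only on $\tau-u=t-u+w$, so re-indexing by $k=t-u$ turns $\K(s,\tau,t)$ into $\sum_{k=0}^{t}\M_{\A}(1+s,0,k)\,\M_{\S}(1-s,0,k+w)$, which increases monotonically to a finite limit under the stability condition that the per-slot product $\M_{\A}(1+s)\,\M_{\S}(1-s)$ be strictly below one for some $s>0$; substituting this limit for $\K$ gives the stationary bound. The two points needing the most care are precisely those the statement merely records: (i) reducing the multi-hop system with initial backlog to the single pair $(\A,\S)$ via the leftover-service treatment of the cross-traffic $A^c_n$, and (ii) making the $t\to\infty$ passage rigorous (monotone convergence together with the stability condition); for both I would invoke~\cite{AlZubaidyTON}.
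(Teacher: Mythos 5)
Your reconstruction is correct and follows exactly the chain the paper itself relies on (and uses again in the proof of Theorem~\ref{thm:singleHop}): expand the $(\min,\times)$-convolution, drop the $u>t$ terms because $\A(u)\ge\A(t)$ and $\S(u,\tau)\ge 1$, apply the union bound and then the moment bound with independence to land on the Mellin-transform kernel, and pass to $t\to\infty$ under the stability condition for the stationary case. The paper merely restates this result from~\cite{AlZubaidyTON} without proof, so nothing further is needed.
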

\noindent In what follows, we refer to the function $\K(s,\tau,t)$ above as the \textit{kernel}. 
The bound given by Theorem~\ref{thm:SOTABound} is applicable to any type of network as long as the Mellin transforms exist and are obtainable. Its usefulness is surmount when applied to the analysis of wireless fading channels as the Mellin transform $\mathcal{M}_{\S}(1-s,u,\tau)$ is already derived for many fading channel models in the literature, e.g., \cite{Schiessl:2015:FBL, Forssell:FBA, Zubaidy:TMM, Naghibi:wiretap},
which makes this approach an attractive one for wireless networks analysis. 

For the convenience of exposition of the results in this paper we define the function $V(s)$ as follows:
\begin{align}\label{eq:MellinV(s)}
V(s) \triangleq [\mathcal{M}_{\mathcal{S}}(1-s,u,\tau)]^{\frac{1}{\tau-u}} 
\end{align}

{\color{black} \textbf{Rayleigh block-fading channel} --
For Rayleigh block fading wireless channel that provides a service process with achievable Shannon capacity per slot, we have for channel $n$
\begin{equation}\label{eq:Rayleigh-Service-Process}
S_n(u,t) = W \sum_{i = u}^{t-1} \log_2 (1 + \gamma_{n}(i) ),
\end{equation}  
where $W$ is the bandwidth and $\gamma_{n}(i)$ is the instantaneous signal-to-noise ratio (SNR) for channel $n$ during time slot $i$. 
Since the channels are i.i.d. both across links and time slots, we write $\gamma_{n}(i)= \bar{\gamma}Y$, where $\bar{\gamma}$ is the average SNR and the channel gain $Y$ is an exponentially distributed random variable with unit mean. In this case, the Mellin transform of the service increment is given by:
\begin{align*}
\mathcal{M}_{\mathcal{S}}(1-s,u,\tau) &=\mathbb{E}[\S^{-s}(u,\tau)] = \mathbb{E}\left[\prod_{i=u}^{\tau-1}(1 + \gamma_{n}(i) )^{\frac{-sW}{\log 2}}\right]\nonumber\\
&= \prod_{i=u}^{\tau-1} \mathbb{E}\left[(1 + \gamma_{n}(i) )^{\frac{-sW}{\log 2}}\right]\nonumber\\
&= \prod_{i=u}^{\tau-1} \int_{0}^{\infty} (1 + \bar{\gamma}y_i )^{\frac{-sW}{\log 2}}e^{-y_i}dy_i \nonumber\\
&= \left[ e^{\frac{1}{\bar\gamma}} \bar{\gamma}^{\frac{-s W}{\log 2}} \Gamma\left(1-\frac{s W}{\log 2}, \bar \gamma^{-1}\right)\right]^{\tau-u}. 
\end{align*}
From~\eqref{eq:MellinV(s)}, we have
\begin{align}
V(s) = e^{\frac{1}{\bar\gamma}} \bar{\gamma}^{\frac{-s W}{\log 2}} \Gamma\left(1-\frac{s W}{\log 2}, \bar \gamma^{-1}\right) \,, 
\end{align}
}
where, $\Gamma(x,a)$ is the lower incomplete Gamma function.

\subsection{Stationary Bound for Transient Analysis}
For transient analysis, one may consider the existing stationary bound of Theorem~\ref{thm:SOTABound} and apply it to the equivalent model in Figure~\ref{fig:SysModel}.
However, this approach is limited as we assumed the arrival increments $\{a_i\}$ to be zero for $i > T$. 
In other words, the arrival process we consider is non-zero over a finite time horizon, while a (meaningfull) stationary bound can only be derived for an arrival process that has non-zero arrivals over infinite time horizon.
Nevertheless, it is worthwhile to establish some form of stationary bound as a reference for the more fine-grained transient analysis presented further below.
One straightforward way to establish such a reference is by assuming the deterministic instantaneous arrivals $\{a_i\}$ to occur over the infinite time horizon and invoke Theorem~\ref{thm:SOTABound}. 
In this subsection, we follow this approach and also discuss a first numerical evaluation.
 
In order to obtain a bound for the stationary virtual delay, one essentially has to determine the limit of the kernel, given in~\eqref{eq:function_M_Hussein}, as $t$ goes to infinity.
\begin{theorem}[Section V-C,~\cite{AlZubaidyTON}]\label{thm:stationary}
A probabilistic stationary end-to-end delay bound for a cascade of $N$ i.i.d. wireless channels with homogeneous average SNRs,  $(\sigma,\rho)$-bounded arrival traffic, and $(\sigma_c,\rho_c)$-bounded cross-traffic is given by
\begin{align*}
\P(\W > w) \leq &\inf_{s > 0} \Big\{\frac{e^{s(-\rho w+\sigma + N\sigma_c)}}{(1-V_0(s))^N}.\\ & \quad \quad \quad \min\{1,(V_0(s))^w(w+1)^{N-1}\}\Big \},
\end{align*}
where 
\begin{equation*}
V_0(s) = e^{s(\rho + \rho_c)}V(s)\, .
\end{equation*}
\end{theorem}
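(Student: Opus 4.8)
The plan is to specialize the general kernel bound of Theorem~\ref{thm:SOTABound} to the cascade of $N$ homogeneous i.i.d.\ wireless channels and then push $t \to \infty$. First I would observe that in the equivalent model of Figure~\ref{fig:SysModel} the cross-traffic $A^c_n$ is $(\sigma_c, \rho_c)$-bounded by assumption, so at each hop the effective arrival process seen by link $n$ is the convolution (in the SNR domain, a product) of the through-traffic and the cross-traffic. Using the known concatenation result for \mx-network calculus, the $N$-hop system admits an end-to-end service characterization obtained by \mx-convolving the $N$ per-hop service processes, and the end-to-end arrival is $(\sigma + N\sigma_c, \rho + N\rho_c)$-type once the cross-traffic at every hop is folded in. The technical device here is that a $(\sigma,\rho)$-bounded arrival has Mellin transform bounded by $\mathcal{M}_{\mathcal{A}}(1+s,u,t) \le e^{s\sigma} e^{s\rho(t-u)}$, so the burst terms factor out as constants $e^{s\sigma}$, $e^{sN\sigma_c}$ and the rate terms combine with $V(s)$ into $V_0(s) = e^{s(\rho+\rho_c)} V(s)$.

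Second I would substitute these Mellin bounds into the kernel $\Kd(s,\tau,t) = \sum_{u=0}^{t} \mathcal{M}_{\mathcal{A}}(1+s,u,t)\cdot \mathcal{M}_{\mathcal{S}}(1-s,u,\tau)$ — more precisely into its $N$-hop generalization, where the single service Mellin transform is replaced by the Mellin transform of an $N$-fold \mx-convolution. The standard manipulation (as in the stationary analysis of~\cite{AlZubaidyTON}) is that the Mellin transform of an $N$-fold convolution of i.i.d.\ homogeneous SNR service processes, summed over the intermediate split points, produces a multinomial-type sum whose leading behavior is $\binom{\tau - u + N - 1}{N-1} (V(s))^{\tau-u}$ — this is the origin of the $(w+1)^{N-1}$ polynomial factor. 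Collecting terms, the kernel at delay $w$ (so $\tau = t + w$) becomes, up to the constant burst factor $e^{s(\sigma + N\sigma_c)}$, a geometric-type sum $\sum_{u=0}^{t} (\text{poly in } t-u+w)\,(V_0(s))^{t-u+w} e^{-s\rho w}$; letting $t \to \infty$ and requiring $V_0(s) < 1$ for convergence, the sum collapses to $\frac{1}{(1-V_0(s))^N}$, and the residual $w$-dependence contributes the $\min\{1,(V_0(s))^w (w+1)^{N-1}\}$ factor (the $\min$ with $1$ because one may either keep or drop the $w$-shifted terms). Taking the infimum over $s>0$ then yields the stated bound.

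The main obstacle I expect is the multi-hop bookkeeping: correctly tracking how the $N$ cross-traffic bursts $\sigma_c$ accumulate through the cascade and how the $N$-fold \mx-convolution of service processes yields exactly the $(1 - V_0(s))^{-N}$ denominator together with the $(w+1)^{N-1}$ polynomial. This requires the leftover-service / concatenation lemma of \mx-network calculus and a careful combinatorial estimate of the convolution sum $\sum_{u_1 \le \cdots \le u_{N-1}} \prod (V(s))^{\text{gap}}$ — the homogeneity assumption on the average SNRs is what makes this sum tractable, since all hops share the same $V(s)$. A secondary subtlety is justifying the interchange of limit and infimum (monotone convergence of the kernel in $t$ suffices) and confirming that the $(\sigma,\rho)$ bound on arrivals indeed transfers to the SNR-domain Mellin bound with the claimed exponent, which follows directly from~\eqref{eq:boundedarrival} and monotonicity of $x \mapsto x^s$ for $s > 0$. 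Since this theorem is quoted verbatim from~\cite[Section~V-C]{AlZubaidyTON}, the proof here would be brief, essentially a pointer to that derivation with the above outline sketched for completeness.
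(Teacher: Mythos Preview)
Your proposal is correct and matches the paper's treatment: the paper does not prove Theorem~\ref{thm:stationary} at all but simply cites it from \cite[Section~V-C]{AlZubaidyTON}, exactly as you anticipate in your final paragraph. The outline you give (Mellin bound for $(\sigma,\rho)$-arrivals, $N$-fold \mx-convolution producing the $\binom{\tau-u+N-1}{N-1}$ factor via Lemma~6 of~\cite{AlZubaidyTON}, geometric summation under $V_0(s)<1$ as $t\to\infty$) is indeed the derivation carried out in that reference, so nothing further is needed here.
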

In order to apply this result to the model described above, we dimension the cross-traffic at each link using Eq.~\eqref{eq:InitBacklog} by setting $\sigma_c = x_\text{max}$ and $\rho_c = 0$.
In the rest of the paper, we refer to this reference bound given in Theorem~\ref{thm:stationary} as \textit{stationary bound}.

\begin{figure}[t]
\centering
\includegraphics[width = 3in]{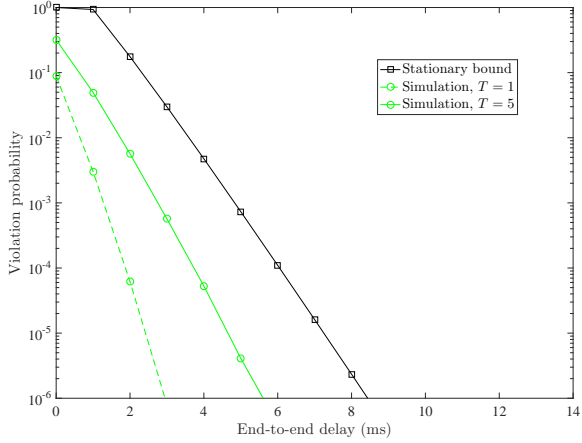}
\caption{Delay violation probability vs. end-to-end delay for a single link with SNR $= 5$ dB, $x_1 = 0$, $\rho = 20$, $\sigma = 0$.}
\label{fig:singleLink_x0_burst20_t1}
\end{figure}
To evaluate the viability of this approach, we compare in Figure~\ref{fig:singleLink_x0_burst20_t1} the transient violation probability obtained using simulations with the stationary bound for a single link wireless transmission.
The considered scenario is parameterized by $1$ ms slot duration, a burst arrival of size $20$ bits and no initial backlog, i.e., $\mathbf{x}=0$.
Accordingly, for determining the stationary bound we set the instantaneous arrival to a constant arrival process with rate $20$ bits per time slot.
In Figure~\ref{fig:singleLink_x0_burst20_t1}, we plot the resulting CCDF of the end-to-end delay from simulation and compare it to the stationary bound obtained using Theorem~\ref{thm:stationary} for two arrival processes with $T = 1$ and $T = 5$. 
The plot reveals the challenge with respect to transient analysis. 
We observe considerable gap between the simulations and the stationary bound.
For smaller $T$, the gap becomes arbitrarily large, and in particular the decay rate of the bound does not match the decay rate of the simulations.
This slackness is a direct result of the method used to obtain the stationary bound, since it is obtained by letting $t \rightarrow \infty$ in \eqref{eq:function_M_Hussein} which results in $\P(\W(t) > w) \rightarrow \P(\W > w)$ in Theorem \ref{thm:stationary}. This results in adding more terms to the summation which in turn increases the slackness in the bound. Although this delay bound proved to be useful for long-term stationary flows, Figure~\ref{fig:singleLink_x0_burst20_t1} shows that it is too loose for per-packet delay analysis.  
This motivates us to investigate a better bound for transient network performance, which we address in the next section.

\section{Transient Analysis} 
\label{sec:TransientBound}
In this section, we present our main contribution, namely, new  transient upper bounds for the delay violation probability for the $N$-hop wireless network described in Sections~\ref{sec:model}  and \ref{sec:Preliminaries}.
To this end, we present two different approaches and later (in Section \ref{sec:numerical}) compare the  probabilistic delay bounds resulting from both approaches. 
A first, straightforward approach is to start from the definition of the kernel as presented in Section~\ref{sec:Min_x_algebra} and derive a transient bound.
We refer to this as \textit{State-Of-The-Art-Transient (SOTAT)  bound}, as it is essentially an application of known results. 
However, as we will show later, this bound - while improving over the stationary bound - is still loose. This motivates us to present a new analysis for the transient behaviour of the system which results in a new \textit{Wireless Transient Bound} (WTB). Furthermore, we analyse the computational complexity of WTB and show how it can be used when the initial backlog information is delayed.
 
\subsection{State Of The Art Transient (SOTAT)  Bound}
In this subsection, we derive a transient upper bound for the violation probability using the results from~\cite{AlZubaidyTON}.
We first note that the bound on the violation probability of the virtual delay at time $t$ in Theorem~\ref{thm:SOTABound} corresponds to the transient analysis problem at hand. 
We hence focus on the components of the kernel. Since the sequence of arrivals is deterministic, we have
\begin{align}\label{eq:Arrivals}
\M_{\A} (s+1,u,t) = \mathbb{E}[(\A(u,t))^s] = [\A(t)/\A(u)]^s.
\end{align} 

Let $\S_0$ denote the dynamic SNR server \cite{Book:chang} that describes the network service offered by the multi-hop route to  the through traffic. The following lemma evaluates the Mellin transform of $\S_0$.
\begin{lemma}[Lemma 6,~\cite{AlZubaidyTON}]
Given $\sigma_c = x_\text{max}$ and $\rho_c = 0$, the Mellin transform of $\S_0(u,\tau)$ satisfies for  $s<1$ that
\begin{align}\label{eq:MellinTransformService}
\M_{\S_0} (1\!-\!s, u,\tau) \!\leq \! e^{sNx_\text{max}}\binom{N-1+\tau-u}{\tau-u}V(s)^{\tau-u}.
\end{align}
\end{lemma}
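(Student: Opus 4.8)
The plan is to build the dynamic SNR server $\S_0$ for the $N$-hop tandem by repeatedly composing per-hop service processes, accounting for the cross-traffic $A^c_n$ that encodes the initial backlog at each link. First I would recall that each wireless link $n$ offers a dynamic server $\S_n(u,\tau)=e^{S_n(u,\tau)}$ to its aggregate input $\A_n$, and that the portion of service left for the through traffic after subtracting the cross-traffic $\A^c_n$ is a \emph{leftover service} $\S_n^{\mathrm{lo}}(u,\tau)$. In \mx-algebra the standard leftover bound gives, for each fixed pair $(u,\tau)$, something of the form $\S_n^{\mathrm{lo}}(u,\tau)\ge \S_n(u,\tau)/\A^c_n(u,\tau)$ on the event that the cross-traffic is admissible; using $\A^c_n(u,\tau)=\min(e^{\kappa(\tau-u)},e^{x_n})\le e^{x_n}\le e^{x_\text{max}}$ this yields $\S_n^{\mathrm{lo}}(u,\tau)\ge e^{-x_\text{max}}\S_n(u,\tau)$. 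The end-to-end server is then the \mx-convolution $\S_0 = \S_1^{\mathrm{lo}}\otimes \cdots \otimes \S_N^{\mathrm{lo}}$.

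Next I would take the Mellin transform $\M_{\S_0}(1-s,u,\tau)=\E[\S_0(u,\tau)^{-s}]$ for $s<1$. The key technical fact (from \cite{AlZubaidyTON}) is that the Mellin transform of an \mx-convolution of independent servers is bounded by a \emph{discrete convolution} of their individual Mellin transforms: since $\S_0(u,\tau)=\inf_{u\le v_1\le\cdots\le v_{N-1}\le\tau}\prod \S_n^{\mathrm{lo}}$, and raising to the negative power $-s$ (with $s<1$, so $-s>-1$) turns the infimum into a supremum which is in turn bounded by the sum over all valid splittings, one gets
\begin{align*}
\M_{\S_0}(1-s,u,\tau)\le \sum_{u\le v_1\le\cdots\le v_{N-1}\le\tau}\ \prod_{n=1}^N \M_{\S_n^{\mathrm{lo}}}(1-s,v_{n-1},v_n),
\end{align*}
with $v_0=u$, $v_N=\tau$, using independence of the $\S_n$ across links to factor the expectation. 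Then I would plug in $\M_{\S_n^{\mathrm{lo}}}(1-s,v_{n-1},v_n)\le e^{sx_\text{max}}\,\M_{\S_n}(1-s,v_{n-1},v_n)=e^{sx_\text{max}}V(s)^{v_n-v_{n-1}}$, where the last equality is exactly the definition of $V(s)$ in \eqref{eq:MellinV(s)} together with the i.i.d.\ assumption (so $\M_{\S_n}(1-s,v_{n-1},v_n)=V(s)^{v_n-v_{n-1}}$ for every link). The $N$ factors of $e^{sx_\text{max}}$ multiply to $e^{sNx_\text{max}}$, and the product of $V(s)^{v_n-v_{n-1}}$ telescopes to $V(s)^{\tau-u}$ regardless of the splitting. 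What remains is to count the number of valid tuples $u\le v_1\le\cdots\le v_{N-1}\le\tau$, which is the number of weak compositions of $\tau-u$ into $N$ nonnegative parts, namely $\binom{N-1+\tau-u}{\tau-u}$. Assembling these pieces gives the claimed inequality.

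I expect the main obstacle to be the step that passes from the \mx-convolution to the discrete sum over splittings while simultaneously handling the negative exponent and the cross-traffic correctly: one must be careful that $-s\in(-1,0)$ is the regime in which $x\mapsto x^{-s}$ is decreasing but the bound $(\inf_i a_i)^{-s}=\sup_i a_i^{-s}\le \sum_i a_i^{-s}$ still goes the right way, and that the leftover-service argument producing $e^{-x_\text{max}}$ per hop is valid uniformly in $(u,\tau)$ rather than only in probability. Since this is cited verbatim as Lemma 6 of \cite{AlZubaidyTON}, the cleanest route is to invoke that lemma's machinery directly; the only genuinely new bookkeeping here is the specialization $\sigma_c=x_\text{max}$, $\rho_c=0$, which makes the $V_0(s)=e^{s(\rho+\rho_c)}V(s)$ of Theorem~\ref{thm:stationary} collapse and the cross-traffic burst appear simply as the multiplicative factor $e^{sNx_\text{max}}$.
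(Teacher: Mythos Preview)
The paper does not supply its own proof of this lemma; it is quoted verbatim as Lemma~6 of \cite{AlZubaidyTON} and used as a black box in deriving \eqref{eq:SOTABound}. Your reconstruction is exactly the standard argument from that reference: per-hop leftover service $\S_n^{\mathrm{lo}}\ge e^{-x_\text{max}}\S_n$, the bound $(\inf_i a_i)^{-s}\le\sum_i a_i^{-s}$ to pass from the \mx-convolution to a sum over splittings, independence across links to factor the expectation, telescoping of the $V(s)^{v_n-v_{n-1}}$ factors, and the weak-composition count $\binom{N-1+\tau-u}{\tau-u}$. So your proposal is correct and matches the cited proof.

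One small remark on your obstacle discussion: the inequality $(\inf_i a_i)^{-s}=\sup_i a_i^{-s}\le\sum_i a_i^{-s}$ requires only $s>0$ (so that $x\mapsto x^{-s}$ is decreasing and positive), not $s\in(0,1)$. The ``$s<1$'' in the statement is inherited from the parameter convention of the original lemma rather than being an analytical necessity for this particular step, and indeed the downstream use in the kernel \eqref{eq:function_M_Hussein} optimizes over all $s>0$.
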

Substituting~\eqref{eq:Arrivals} and~\eqref{eq:MellinTransformService} into the definition of the kernel \eqref{eq:function_M_Hussein}, we obtain:
\begin{align}\label{eq:SOTABound}
\K(s,\tau,t) \leq & e^{sNx_\text{max} }\!\left[\sum_{u=0}^{t} \!\left(\!\frac{\A(t)}{\A(u)}\!\right)^s\!\binom{N\!-\!1\!+\!\tau\!-\!u}{\tau-u}V^{\tau-u}(s)\right].
\end{align}
The SOTAT bound is then computed by minimizing the RHS of~\eqref{eq:SOTABound} over $s$. For the case of a single link, a closed form expression can be obtained for the SOTAT bound under $(\sigma,\rho)$-bounded arrivals which is given by the following corollary.
\begin{corollary}
Assuming $\A(t)$ follows the $(\sigma,\rho)$-bounded traffic characterization, defined in~\eqref{eq:boundedarrival}, and for a single wireless link, an upper bound for $\P (\W(t) > w)$ is given by
\begin{align*}
\min_{s>0} \; \left\{ e^{s(x_1 - \rho w)}(V_0(s))^w\Big [e^{s \sigma}\cdot \frac{V_0(s) - (V_0(s))^{t+1}}{1-V_0(s)} + 1\Big] \right \}.
\end{align*}
\end{corollary}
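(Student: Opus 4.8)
The plan is to specialize the general single-hop case of the SOTAT kernel bound~\eqref{eq:SOTABound} and then use the $(\sigma,\rho)$ characterization of the arrivals to collapse the resulting sum into a closed-form geometric series, after which Theorem~\ref{thm:SOTABound} finishes the argument. First I would set $N=1$ in~\eqref{eq:SOTABound}. Then $x_\text{max}=x_1$ and the binomial coefficient degenerates, $\binom{N-1+\tau-u}{\tau-u}=\binom{\tau-u}{\tau-u}=1$, so that
\begin{equation*}
\K(s,\tau,t)\le e^{sx_1}\sum_{u=0}^{t}\left(\frac{\A(t)}{\A(u)}\right)^{s}V(s)^{\tau-u}.
\end{equation*}

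Next I would bound the arrival ratio. Since $\A(t)/\A(u)=e^{A(t)-A(u)}$, the $(\sigma,\rho)$ constraint~\eqref{eq:boundedarrival} gives $(\A(t)/\A(u))^{s}\le e^{s(\rho(t-u)+\sigma)}$ for $s>0$ and $u\le t-1$, while the boundary term $u=t$ equals exactly $1\cdot V(s)^{w}$ because $A(t)-A(t)=0$ and $\tau-t=w$. Pulling the $u=t$ term out of the sum, re-indexing the rest by $k=t-u$ (so $k$ runs from $1$ to $t$), factoring out $V(s)^{w}$, and using the definition $V_0(s)=e^{s(\rho+\rho_c)}V(s)$ together with the dimensioning $\sigma_c=x_\text{max}$, $\rho_c=0$ from~\eqref{eq:InitBacklog} to identify $e^{s\rho}V(s)=V_0(s)$, I obtain
\begin{equation*}
\K(s,\tau,t)\le e^{sx_1}V(s)^{w}\left[e^{s\sigma}\sum_{k=1}^{t}V_0(s)^{k}+1\right]
= e^{sx_1}V(s)^{w}\left[e^{s\sigma}\,\frac{V_0(s)-V_0(s)^{t+1}}{1-V_0(s)}+1\right],
\end{equation*}
where the finite geometric sum is well defined for every finite $t$ (the case $V_0(s)=1$ being covered by continuity). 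Finally, rewriting $e^{sx_1}V(s)^{w}=e^{s(x_1-\rho w)}V_0(s)^{w}$ and invoking Theorem~\ref{thm:SOTABound}, which yields $\P(\W(t)>w)\le\inf_{s>0}\K(s,\tau,t)$, produces the claimed expression.

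The computation is essentially routine bookkeeping; the only point that requires genuine care is the treatment of the $u=t$ boundary term — absorbing it into the $e^{s\sigma}$ prefactor would give a structurally different (and slightly looser) bracket than the one in the statement — together with consistent use of $\tau=t+w$ when re-indexing, so that the exponent of $V_0$ inside the bracket and the exponent $w$ outside end up exactly as claimed. Beyond that I do not anticipate any substantive obstacle.
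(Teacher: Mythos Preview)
Your proposal is correct and follows essentially the same route as the paper: specialize~\eqref{eq:SOTABound} to $N=1$, separate the $u=t$ term, apply the $(\sigma,\rho)$ bound to the remaining terms, and collapse the resulting geometric sum before invoking Theorem~\ref{thm:SOTABound}. The only cosmetic difference is that you re-index via $k=t-u$ before summing, whereas the paper evaluates the geometric series directly in $u$; the algebra and the final expression are identical.
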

\begin{proof}
Using~\eqref{eq:boundedarrival}, we obtain 
\begin{align}\label{eq:sigmarho}
[\A(t)/\A(u)]^{s} \leq e^{s(\sigma + \rho(t-u))}.
\end{align}
Now, using Theorem~\ref{thm:SOTABound} with $N=1$ and substituting~\eqref{eq:sigmarho} in~\eqref{eq:SOTABound}, we obtain
\begin{align*}
& \P (\W(t) > w)\\ 
& \leq \min_{s>0} \; \left\{ e^{sx_\text{max}}\Big[\sum_{u=0}^{t-1} e^{s (\sigma + \rho(t-u))}  V^{\tau-u}(s) + (V(s))^{\tau-t}\Big]\right \}\\ 
&= \min_{s>0} \; \left\{ e^{s(x_\text{max} - \rho w)}\Big[\sum_{u=0}^{t-1} e^{s \sigma}  V^{\tau-u}_0(s) + (V_0(s))^w\Big]\right \}\\
&= \min_{s>0} \; \left\{e^{s(x_1 - \rho w)}\Big [e^{s \sigma}  V_0^{\tau}(s) \frac{(V_0(s))^{-t} - 1}{({V_0(s)})^{-1} - 1} + (V_0(s))^w\Big]\right \} \\
&= \min_{s>0} \; \left\{e^{s(x_1 - \rho w)}(V_0(s))^w\Big [e^{s \sigma} \frac{V_0(s) - (V_0(s))^{t+1}}{1-V_0(s)} + 1\Big]\right \}.
\end{align*}
\end{proof}

Later, in Section~\ref{sec:numerical}, we will show that for multi-hop scenarios the SOTAT bound can become very loose in particular in cases with non-zero initial backlog. 
Nevertheless, it still captures the exponential decay rate of the delay tail distribution. 
The SOTAT bound slackness is mainly due to the fact that it is based on results that are initially derived for stationary settings. 
Although we believe that asymptotically we may not be able to improve on this bound, there is plenty of room for improvement for short sequences  (i.e., small $T$) and short delay target -- which is the case for most modern (and rapidly growing) MTC and CPS applications. 
To this end, we next derive the proposed WTB.



\begin{table*}[h]
\caption{Comparison of the different delay bounds for single wireless links with $(\sigma,\rho)$-bounded arrivals}
\centering
\begin{tabular}{| p{0.2\linewidth}|p{0.6\linewidth}|}
\hline
\\
Stationary & $
\inf_{s > 0} \left\{ e^{s(x_1 +\sigma -\rho w)} \cdot \frac{V^w_0(s)}{(1-V_0(s))} \cdot \max \{ V^{-w}_0(s), 1 \}\right \}
$\\
\hline \\
\shortstack[c]{SOTAT}  & $
\min_{s>0} \; \left\{ e^{s(x_1 +\sigma -\rho w)} \cdot \frac{V^w_0(s)}{(1-V_0(s))} \cdot \Big[V_0(s) - V^{t+1}_0(s) + \frac{1-V_0(s)}{e^{-s\sigma}} \Big] \right \}
$ \\
\hline \\
\shortstack[c]{Proposed WTB} & $
\min_{s>0}\left\{ e^{s(x_1 +\sigma -\rho w)} \cdot \frac{V^w_0(s)}{(1-V_0(s))} \cdot \Big[ V^{t}_0(s)  \cdot (1-V_0(s))  +   \frac{V_0(s)}{e^{s x_1}}  \cdot (1-V^{t-1}_0(s))\Big] \right\}
$\\
\hline
\end{tabular}
\label{tabel2}
\end{table*}

\subsection{Wireless Transient Bound (WTB)}
Our derivation of WTB is inspired by the bounding techniques used in~\cite{AlZubaidyTON}. However, we conduct independent analysis starting with the basic principles of network calculus and tailor the result, from the beginning, to our system with initial backlog. We note that our analysis is more involved due to the presence of the initial backlog $\mathbf{x}$.

We start by presenting our analysis for a single-hop scenario, i.e., for $N=1$. Then we generalize the obtained results for the multi-hop, $N>1$, case. In the following theorem, we state the proposed upper bound for the single hop case.
\begin{theorem}\label{thm:singleHop}
Given an SNR arrival sequence $\A$ traversing a wireless channel and given an initial backlog $x_1$, 
 an upper bound for the delay violation probability, $\P\{\W(t) > w\}$, is given by
\begin{align*}
\min_{s>0}\left[\A^s(t)e^{sx_1} V^{\tau}(s) + \sum_{u=1}^{t-1} [\A(t)/\A(u)]^s V^{\tau-u}(s)\right]\, ,
\end{align*}
where $\tau = t +w$.
\end{theorem}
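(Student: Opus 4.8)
The plan is to start from the SNR-domain characterization of the delay violation event and unfold the single-hop dynamics directly, rather than invoking the general kernel of Theorem~\ref{thm:SOTABound} as a black box. Recall that in the equivalent model the link sees through-arrivals $\A(t)$ together with cross-traffic $\A_1^c$ encoding the initial backlog $x_1$ via~\eqref{eq:InitBacklog}. The delay violation event $\{\W(t) > w\}$ is equivalent to $\A(t)\cdot e^{x_1} > \D(t+w) = \D(\tau)$ in the SNR domain, using the fact that the burst function $\kappa$ makes $\A_1^c(u,\tau)$ either $e^{x_1}$ or (for $u=0$) collapse appropriately. Applying the \mx~input/output bound $\D(0,\tau) \ge \A_1 \otimes \S_1(0,\tau)$ and the definition of \mx-convolution, we get $\D(\tau) \ge \inf_{0\le u \le \tau}\{\A_1(0,u)\cdot \S_1(u,\tau)\}$, so the violation event implies that for \emph{every} split point $u$, $\A(t)e^{x_1} > \A_1(0,u)\,\S_1(u,\tau)$; equivalently the union over $u$ of the events $\{\A(t)e^{x_1} > \A_1(0,u)\S_1(u,\tau)\}$ must be considered, and a union bound gives $\P(\W(t) > w) \le \sum_{u=0}^{\tau} \P(\A(t)e^{x_1} > \A_1(0,u)\S_1(u,\tau))$.

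Next I would restrict the sum to the range that actually matters. Because the through-arrivals are supported on $[0,T)$ and, more importantly, because $\A_1(0,u) = \A(u)\cdot A_1^c(0,u)$ where the cross-traffic burst is infinite for $u\ge 1$ and $x_1$-valued for the relevant tail, only the terms $u=0,1,\dots,t$ survive with finite contribution: for $u=0$ the cross-traffic contributes the full $e^{x_1}$ and $\A(0)=1$, while for $1\le u\le t$ the factor $e^{x_1}$ on the left is cancelled against the infinite burst on the right except through the ratio $\A(t)/\A(u)$. Careful bookkeeping of which $u$ picks up the $e^{x_1}$ factor and which does not is what separates the two pieces of the claimed bound: the isolated $u=0$ term $\A^s(t)e^{sx_1}V^\tau(s)$ and the sum $\sum_{u=1}^{t-1}[\A(t)/\A(u)]^s V^{\tau-u}(s)$ over intermediate split points (with the $u=t$ term being absorbed or bounded trivially since it corresponds to zero residual service time and $V^0 = 1$, though one should check whether it is dropped or merged).

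For each surviving term I would apply the standard Chernoff/moment bound: $\P(\mathcal{Z} > 1) \le \EE[\mathcal{Z}^s]$ for $s>0$, with $\mathcal{Z} = \A(t)e^{x_1}/(\A_1(0,u)\S_1(u,\tau))$. Since the through-arrival sequence is deterministic, $\EE[\mathcal{Z}^s]$ factorizes into the deterministic ratio $[\A(t)/\A(u)]^s$ (and $\A^s(t)e^{sx_1}$ for $u=0$) times $\EE[\S_1(u,\tau)^{-s}] = \M_{\S_1}(1-s,u,\tau) = V(s)^{\tau-u}$, using the i.i.d.\ assumption on the service increments and the definition~\eqref{eq:MellinV(s)} of $V(s)$. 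Summing these and taking the infimum over $s>0$ yields exactly the stated expression.

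The main obstacle I anticipate is the careful treatment of the cross-traffic burst at the boundary split points — precisely accounting for how the $e^{x_1}$ factor is distributed across the $u=0$ term versus the $1\le u\le t-1$ terms, and justifying that the remaining split points ($u \ge t$, and the degenerate behavior of $\kappa$) contribute nothing or are subsumed. A secondary technical point is ensuring the moment bound is applied with a valid $s$ (the Mellin transform $V(s)$ must exist, i.e.\ $s<1$ in the relevant regime for the service model), and confirming that the union bound over split points is legitimate here even though the convolution is an infimum — this is the usual argument that $\D(\tau)$ smaller than $\A(t)e^{x_1}$ forces at least one term of the infimum to be exceeded, hence the union bound. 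Once the boundary bookkeeping is pinned down, the rest is a routine application of the moment bound and the i.i.d.\ factorization.
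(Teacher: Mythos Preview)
Your proposal follows essentially the same approach as the paper's proof: translate the delay violation to $\D(\tau) < \A(t)e^{x_1}$, lower-bound $\D(\tau)$ by the \mx-convolution, split the resulting union at $u=0$ (where $\A_1^c(0)=1$ so the $e^{x_1}$ factor survives) versus $u\ge 1$ (where $\A_1^c(u)=e^{x_1}$ cancels it), drop the $u\ge t$ terms, and then apply the union bound, the moment bound, and the i.i.d.\ factorization of the service. One small correction to your boundary bookkeeping: the terms with $u\ge t$ vanish not because of ``zero residual service time'' (indeed $\tau-u=w>0$ at $u=t$) but because $\A(u)=\A(t)$ for $u\ge t$, so the event becomes $\{\S(\tau-u)<1\}$, which has probability zero since $\S\ge 1$ almost surely.
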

\begin{proof}
Let $\S(t)$ be the SNR service process for wireless channel given by \eqref{eq:SNR-Service-Process}.
From the definition of dynamic server \cite{AlZubaidyTON}, we have for all $\tau \ge 0$ 
\begin{align*}
\D(\tau) \geq \underset{0\leq u \leq \tau}{\min} \{\S(\tau-u) \cdot \A(u) \cdot \A^c_1(u)\}. 
\end{align*}
Recall that $\A^c_1(0) = 1$ and $\A^c_1(u) = e^{x_1}$ for all $u > 0$. Now, the event $\{\W(t) > w\}$ is equivalent to the event that the cumulative departures at time $\tau$ is strictly less than the cumulative arrivals at time $t$ plus the initial backlog $x_1$. Therefore,
{\allowdisplaybreaks \begin{align}\label{eq:VioProb}
&\P\{\W(t) > w\} = \P\{\D(\tau) < \A(t) e^{x_1}\} \nonumber \\
&= \P\left\{\underset{0\leq u \leq \tau}{\min} [\S(\tau-u) \cdot \A(u) \cdot \A^c_1(u)] < \A(t) e^{x_1}\right\}\nonumber\\
&= \P\left\{ \!\{\S(\tau) \! < \! \A(t) e^{x_1}\}\! \cup \! \left(\bigcup_{u=1}^{\tau}\{\S(\tau-u) \cdot \A(u) \! < \! \A(t)\}\right) \!\right\} \nonumber\\
&= \P\left\{\! \{\S(\tau) \! < \! \A(t) e^{x_1}\} \! \cup \! \left(\bigcup_{u=1}^{t-1}\{\S(\tau-u) \cdot \A(u) \! < \! \A(t)\}\right)\! \right\} \nonumber\\
&\leq \P\{\S(\tau) \! < \! \A(t) e^{x_1}\} + \sum_{u=1}^{t-1}\P\{\S(\tau-u) \cdot \A(u) \! < \! \A(t)\} \nonumber \\
&\leq \min_{s>0}\Big[ [\A(t)]^se^{sx_1}\mathbb{E}[\S^{-s}(\tau)] \!+\! \sum_{u=1}^{t-1} \left[\frac{\A(t)}{\A(u)} \right]^s \! \mathbb{E}[\S^{-s}(\tau-u)]\Big] \,\nonumber\\
&= \min_{s>0}\left[[\A(t)]^se^{sx_1} V^{\tau}(s) + \sum_{u=1}^{t-1} \left[\frac{\A(t)}{\A(u)} \right]^s\!\! V^{\tau-u}(s)\right].
\end{align}}
In the third step above we have used the fact that $\P\{\S(\tau-u) \cdot \A(u) < \A(t)\} = 0$ for $u \geq t$. The fourth step utilizes the union bound and the fifth step follows from the moment bound.
\end{proof}

\begin{figure*}[!t]
\normalsize
\begin{align}\label{eq:NHopBound}
&\Phi(s) = V^{\tau}(s)\left[\binom{N +\tau - 2}{\tau - 1} \cdot \sum_{u=1}^{t-1} [\A(t)/\A(u)]^{s} V^{-u}(s) + \sum_{i=0}^{N-1} \binom{i +\tau - 1}{\tau - 1}[\A(t)]^s e^{s\sum_{n=1}^{N-i} x_n}\right].
\end{align}
\hrulefill
\vspace*{4pt}
\end{figure*}

Next, we extend the bound in Theorem~\ref{thm:singleHop} to the homogeneous $N$-hop case.  
Even though the analysis for the $N$-hop case is more involved, it essentially uses the same bounding techniques from the proof of Theorem~\ref{thm:singleHop}. 

\begin{theorem}\label{thm:NHop}
When the sequence $\A$ traverses the N-hop wireless network in Figure~\ref{fig:SysModel} and given the initial backlog vector $\mathbf x$, we compute  $\P\{\W(t) > w\} \leq \min_{s> 0}\; \Phi(s)$, where $\Phi(s)$ is given by~\eqref{eq:NHopBound}.
\end{theorem}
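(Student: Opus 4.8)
The plan is to follow the blueprint of the proof of Theorem~\ref{thm:singleHop}, but with the single dynamic server replaced by the \mx-convolution of the $N$ per-hop dynamic servers. First I would write the dynamic-server relation at each link, $\D_n(\tau)\ge\min_{0\le v\le\tau}\{\S_n(v,\tau)\cdot\A_n(v)\}$, with $\A_n(v)=\D_{n-1}(v)\cdot\A^c_n(v)$ for $n\ge 2$ and $\A_1(v)=\A(v)\cdot\A^c_1(v)$, and unroll this recursion from hop $N$ down to hop $1$ to obtain
\[
\D(\tau)=\D_N(\tau)\ \ge\ \min_{0\le v_1\le v_2\le\cdots\le v_N\le\tau}\Big\{\A(v_1)\cdot\prod_{n=1}^N\A^c_n(v_n)\cdot\prod_{n=1}^N\S_n(v_n,v_{n+1})\Big\},
\]
with the convention $v_{N+1}=\tau$. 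Since $\A^c_n(0)=1$ and $\A^c_n(v)=e^{x_n}$ for $v>0$, the violation event $\{\W(t)>w\}=\{\D(\tau)<\A(t)e^{\sum_{n=1}^N x_n}\}$ then becomes the union, over all admissible chains $(v_1,\dots,v_N)$, of the events $\big\{\A(v_1)\prod_n\A^c_n(v_n)\prod_n\S_n(v_n,v_{n+1})<\A(t)e^{\sum_n x_n}\big\}$.

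Next I would prune and stratify this union. Exactly as in the single-hop argument, every chain with $v_1\ge t$ contributes zero probability: then all $v_n\ge t\ge 1$, so $\prod_n\A^c_n(v_n)=e^{\sum_n x_n}$, $\A(v_1)\ge\A(t)$, and each $\S_n\ge 1$, making the left-hand side at least $\A(t)e^{\sum_n x_n}$. The surviving chains fall into family~(i), those with $1\le v_1\le t-1$, where every cross-traffic factor equals $e^{x_n}$ and the inequality reduces to $\prod_n\S_n(v_n,v_{n+1})<\A(t)/\A(v_1)$; and family~(ii), those with $v_1=0$, which I would further index by the number $j\in\{1,\dots,N\}$ of leading zeros, i.e.\ $v_1=\cdots=v_j=0<v_{j+1}$ (all zero when $j=N$); here $\A(v_1)=1$, the active cross-traffic is $e^{\sum_{n>j}x_n}$, and the inequality reduces to $\S_j(0,v_{j+1})\prod_{n>j}\S_n(v_n,v_{n+1})<\A(t)e^{\sum_{n\le j}x_n}$.

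Then I would apply the union bound over all surviving chains together with the moment (Chernoff) bound, mirroring the last two steps of the proof of Theorem~\ref{thm:singleHop}. The crucial simplification is that, because the service increments are i.i.d.\ across links and time slots, the Mellin transform of any product of service terms over disjoint intervals that jointly cover a window of length $\ell$ equals $V^{\ell}(s)$, regardless of the interior split points --- the window has length $\tau-v_1$ in family~(i) and length $\tau$ in family~(ii). Hence the inner sums over interior split points collapse to counting: the number of chains $v_1\le v_2\le\cdots\le v_N\le\tau$ with $v_1$ fixed is $\binom{N-1+\tau-v_1}{\tau-v_1}$ by stars and bars, which for $v_1\ge 1$ is dominated by $\binom{N+\tau-2}{\tau-1}$ and factored out of the sum over $v_1$; and the number of chains $0<v_{j+1}\le\cdots\le v_N\le\tau$ is $\binom{N-j+\tau-1}{\tau-1}$, which after setting $i:=N-j$ becomes exactly the coefficient $\binom{i+\tau-1}{\tau-1}$ multiplying $[\A(t)]^s e^{s\sum_{n=1}^{N-i}x_n}$. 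Assembling the two families produces $V^{\tau}(s)$ times the bracket in~\eqref{eq:NHopBound}, and minimizing over $s>0$ yields the claim. (As a sanity check, for $N=1$ this recovers Theorem~\ref{thm:singleHop}.)

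The step I expect to be the main obstacle is the stratification bookkeeping: one has to keep track of precisely which $\A^c_n$ factors remain active as the leading entries of a chain vanish, and to separate cleanly the backlog \emph{supplied} on the left-hand side of the inequality from the backlog that must still be \emph{accounted for} on the right-hand side, so that the residual exponent comes out as the partial sum $\sum_{n=1}^{N-i}x_n$ rather than the complementary sum or the crude $Nx_\text{max}$ used in the corresponding lemma of~\cite{AlZubaidyTON}. Once the stratification is correctly set up, the chain counts and the monotonicity estimate $\binom{N-1+\tau-v_1}{\tau-v_1}\le\binom{N+\tau-2}{\tau-1}$ for $v_1\ge 1$ are routine.
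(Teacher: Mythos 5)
Your proposal is correct and takes essentially the same route as the paper's proof: the paper unfolds the \mx-convolution iteratively from link $N$ down to link $1$, splitting off the boundary ($u=0$) term at each stage before applying the union bound, whereas you unroll the full $N$-fold convolution first and then stratify the union of index chains by the number of leading zeros — but the resulting event decomposition, the residual backlog exponents $e^{s\sum_{n=1}^{N-i}x_n}$, the pruning of chains with $v_1\ge t$, the i.i.d.\ collapse of the Mellin transforms to $V^{\ell}(s)$, and the stars-and-bars counts (including the estimate $\binom{N-1+\tau-v_1}{\tau-v_1}\le\binom{N+\tau-2}{\tau-1}$) coincide term by term with the paper's \eqref{eq:NHop2}--\eqref{eq1:NHop1_secondterm}. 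No gap.
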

\begin{proof}
{\color{black} A key aspect we use in deriving the bound is to unfold the \mx-convolution starting with the arrival and service processes of link $N$ and then iteratively bound the departure processes $\{\D_n\}$ in the decreasing order of $n$. This allowed us to systematically account for the initial backlog at each node.} Full proof is given in  Appendix A.
\end{proof}
{\color{black} Observe that the expression in~\eqref{eq:NHopBound} is valid for any sequence of arrivals in the interval $[0,T]$. One may further simplify this expression by assuming that the cumulative arrival process obeys $(\sigma,\rho)$-bounded traffic characterization. This results in a simpler upper bound which is given by the following corollary.}
{\color{black}\begin{corollary}
Assuming $\A(t)$ obeys $(\sigma,\rho)$-bounded traffic characterization, defined in~\eqref{eq:boundedarrival}, the proposed transient bound in Theorem~\ref{thm:NHop} is reduced to the following:
\begin{align*}
\P\{\W(t) > w\} \leq & \min_{s> 0}\; V^{\tau}(s)\Big[ \sum_{i=0}^{N-1} \binom{i +\tau - 1}{\tau - 1} e^{s\sum_{n=1}^{N-i} x_n} \\
& + \binom{N +\tau - 2}{\tau - 1} \cdot e^{s(\sigma + \rho t)}\cdot \frac{1-V_0^{-t}(s)}{V_0(s)-1} \Big],
\end{align*}
where $V_0(s) = e^{s\rho}V(s)$.
\end{corollary}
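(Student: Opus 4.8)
The plan is to read the corollary off Theorem~\ref{thm:NHop} by specializing its bound $\Phi(s)$ to $(\sigma,\rho)$-bounded arrivals. Note that $\Phi(s)$ in \eqref{eq:NHopBound} depends on the arrival sequence only through $[\A(t)]^s$ and the terms $[\A(t)/\A(u)]^s$, and that $\Phi(s)$ is a sum of non-negative quantities, each nondecreasing in these factors for $s>0$. Consequently, replacing each such factor by a deterministic upper bound and plugging the result into \eqref{eq:NHopBound} produces, through Theorem~\ref{thm:NHop}, a valid upper bound on $\P\{\W(t)>w\}$.

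First I would invoke \eqref{eq:boundedarrival}: its $u=0$ instance yields $[\A(t)]^s \le e^{s(\sigma+\rho t)}$, and the general instance yields $[\A(t)/\A(u)]^s \le e^{s(\sigma+\rho(t-u))}$, as recorded in \eqref{eq:sigmarho}. Substituting these into \eqref{eq:NHopBound}, the $i$-indexed sum carries no dependence on $u$ and so reduces immediately, with its binomial coefficients and backlog exponents $\sum_{n=1}^{N-i}x_n$ intact. For the $u$-indexed sum the key step is the identity $e^{s\rho(t-u)}V^{-u}(s) = e^{s\rho t}\bigl(e^{s\rho}V(s)\bigr)^{-u} = e^{s\rho t}V_0^{-u}(s)$ with $V_0(s)=e^{s\rho}V(s)$, which turns $\sum_{u=1}^{t-1}[\A(t)/\A(u)]^s V^{-u}(s)$ into $e^{s(\sigma+\rho t)}\sum_{u=1}^{t-1}V_0^{-u}(s)$. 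I would then evaluate this finite geometric series, enlarging the range to $u=1,\dots,t$ (which only adds non-negative terms and hence keeps the bound valid) to obtain the compact form $\sum_{u=1}^{t}V_0^{-u}(s)=\frac{1-V_0^{-t}(s)}{V_0(s)-1}$. Collecting the common prefactor $V^{\tau}(s)$ and the coefficient $\binom{N+\tau-2}{\tau-1}$ gives the stated expression, and taking $\min_{s>0}$ completes the argument.

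I do not expect a real obstacle here: the claim is a direct corollary, separated from Theorem~\ref{thm:NHop} only by routine algebra. The single point that needs attention is the exponent bookkeeping between $V(s)$ and $V_0(s)$ (and, relatedly, pulling the burst/rate factors $e^{s\sigma}$ and $e^{s\rho t}$ correctly out of the geometric sum). As a consistency check, specializing to $N=1$ should collapse the bound to the single-link WTB entry of Table~\ref{tabel2}.
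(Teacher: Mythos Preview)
Your proposal is correct and follows the same route as the paper: substitute the $(\sigma,\rho)$ envelope \eqref{eq:sigmarho} into the $u$-sum of $\Phi(s)$ in \eqref{eq:NHopBound} and collapse the resulting geometric series in $V_0^{-u}(s)$. You are in fact slightly more careful than the paper, which writes $\sum_{u=1}^{t-1}V_0^{-u}(s)=\tfrac{1-V_0^{-t}(s)}{V_0(s)-1}$ as an equality; your explicit enlargement of the summation range to $u\le t$ is what justifies this closed form as an upper bound.
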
}
\begin{proof}
Using~\eqref{eq:sigmarho} in the summation part of the first term of $\Phi(s)$, given in~\eqref{eq:NHopBound}, we obtain
\begin{align*}
\sum_{u=1}^{t-1} [\A(t)/\A(u)]^{s} V^{-u}(s) &\leq \sum_{u=1}^{t-1} e^{s(\sigma + \rho (t-u))} V(s)^{-u}\\
&= e^{s(\sigma + \rho t)} \cdot \sum_{u=1}^{t-1} [e^{s\rho} V(s)]^{-u}\\
&= e^{s(\sigma + \rho t)}\cdot \frac{1-V_0^{-t}(s)}{V_0(s)-1}.
\end{align*}
The corollary follows by substituting the above inequality in~\eqref{eq:NHopBound}.
\end{proof}
\noindent For the case of a single link and for $(\sigma,\rho)$-bounded arrivals, we present different bounds in Tabel~\ref{tabel2}.

Next, we derive a probabilistic transient bound for the total backlog in the system. 

\textbf{Transient Backlog Bound} --
{\color{black} Given the initial backlog vector $\mathbf{x}$ and using Theorem~\ref{thm:NHop}, we derive an upper bound for the total backlog $B(t)$ in the system at time $t$ which 
we state  in the following corollary.}
{\color{black} \begin{corollary}\label{lem:backlog}
Given the initial backlog vector $\mathbf{x}$, an upper bound for the violation probability for the total backlog in the system at time $t$ is given by 
\begin{align*}
\P\{\B(t) > x\} \leq  \min_{s > 0} \quad e^{-xs}\Phi(s).
\end{align*}
\end{corollary}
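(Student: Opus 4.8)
The plan is to reduce the backlog violation event to a delay-type event that has already been handled in Theorem~\ref{thm:NHop}. Recall from the SNR-domain description that $\B(t) = \A(t)\cdot\prod_{n=1}^N e^{x_n}/\D(t)$, so the event $\{\B(t) > x\}$ is equivalent to $\{\D(t) < e^{-x}\A(t)\prod_{n=1}^N e^{x_n}\}$. Comparing with the proof of Theorem~\ref{thm:singleHop}, the event $\{\W(t) > w\}$ was rewritten as $\{\D(\tau) < \A(t)\prod_{n=1}^N e^{x_n}\}$ with $\tau = t+w$; the only structural difference is that the backlog event replaces the time shift $\tau = t+w$ by the "same-time" comparison $\tau = t$ and inserts an extra multiplicative slack $e^{-x}$ on the right-hand side. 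So the first step is to run the identical union-bound / moment-bound unfolding of the \mx-convolution as in the proof of Theorem~\ref{thm:NHop} (Appendix A), but with $w = 0$ (hence $\tau = t$) and carrying the factor $e^{-x}$ through.

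The second step is bookkeeping: in each term produced by the union bound one gets a probability of the form $\P\{\S_0(\text{something}) \cdot \A(u) < e^{-x}\A(t)\prod_n e^{x_n}\}$ (or the boundary term with $u=0$), and applying the Chernoff/moment bound with parameter $s>0$ pulls out a factor $(e^{-x})^s = e^{-xs}$ uniformly from every term. Everything else is exactly the expression that was summed to form $\Phi(s)$, now evaluated at $\tau = t$. Thus the bound collapses to $\min_{s>0} e^{-xs}\,\Phi(s)\big|_{\tau = t}$, and since $\tau$ is just shorthand for $t+w$ the statement follows by noting $\Phi(s)$ evaluated at $w=0$ is what appears. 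Concretely I would write: by \eqref{eq:backlog} the SNR backlog is $\B(t) = \A(t)\prod_{n=1}^N e^{x_n}/\D(t)$; hence $\P\{\B(t) > x\} = \P\{\D(t) < e^{-x}\A(t)\prod_{n=1}^N e^{x_n}\}$; then repeat verbatim the steps \eqref{eq:VioProb} of Theorem~\ref{thm:singleHop} generalized to $N$ hops as in Appendix~A, with $\tau$ replaced by $t$ and with the extra factor $e^{-x}$ inside every probability, so that the moment bound yields the common prefactor $e^{-xs}$.

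I do not expect a genuine obstacle here; this is the standard "delay-to-backlog" transfer in network calculus. The one point requiring a little care is making sure the $e^{-x}$ slack is distributed to \emph{every} term in the union bound (including the $\S(\tau) < \cdots$ boundary term and all the cross-traffic/initial-backlog terms indexed by $i$ in \eqref{eq:NHopBound}) so that $e^{-xs}$ genuinely factors out of the whole sum rather than only some summands; since $e^{-x}\le 1$ this is immediate, but it should be stated explicitly. A second minor point is that $\Phi(s)$ as written already has $\tau = t+w$ baked in, so one should either state the corollary with the understanding that $w = 0$ (i.e. $\tau = t$) or re-display the $w=0$ specialization of $\Phi(s)$; I would add a sentence clarifying that $\Phi(s)$ here is taken with $\tau = t$.
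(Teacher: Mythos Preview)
Your proposal is correct and follows essentially the same route as the paper: rewrite the backlog violation event as $\P\{\D(t) < e^{-x}\A(t)\prod_{n} e^{x_n}\}$, note that this coincides with the first line of the delay derivation in Appendix~A up to the extra factor $e^{-x}$, and then rerun the same union-bound/moment-bound unfolding so that $e^{-xs}$ factors out of every term. Your observation that $\Phi(s)$ must be read with $\tau = t$ (i.e., $w=0$) is a clarification the paper leaves implicit.
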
}
\begin{proof}
The proof is given in Appendix B.
\end{proof}

\subsection{Complexity Analysis}  
From~\eqref{eq:NHopBound}, we infer that the computational complexity for computing $\Phi(s)$ is $O(t+N)$ for $0\le t\le T$. Further, to obtain the WTB we need to solve the optimization problem of minimizing $\Phi(s)$ over $s > 0$. Thankfully, this is a convex optimization problem as $\Phi(s)$ is convex which we show next in the following lemma.
{\color{black}\begin{lemma}\label{lem:convexity}
Assuming $V(s)$ is differentiable, for $s > 0$, the function $\Phi(s)$ is convex.
\end{lemma}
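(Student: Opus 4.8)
The plan is to show that $\Phi(s)$ is a non-negative linear combination of functions that are each convex (indeed log-convex) on $(0,\infty)$, so that convexity follows by closure of convexity under addition and under multiplication by non-negative scalars. Concretely, inspecting \eqref{eq:NHopBound}, $\Phi(s)$ is a sum over $u \in \{1,\dots,t-1\}$ and over $i \in \{0,\dots,N-1\}$ of terms of the two forms
\[
c_u \,[\A(t)/\A(u)]^s\, V^{\tau-u}(s) \quad\text{and}\quad d_i\, [\A(t)]^s e^{s\sum_{n=1}^{N-i}x_n}\, V^{\tau}(s),
\]
where the binomial coefficients $c_u = \binom{N+\tau-2}{\tau-1}$ and $d_i = \binom{i+\tau-1}{\tau-1}$ are non-negative constants (they do not depend on $s$). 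So it suffices to prove that each term, viewed as a function of $s>0$, is convex.

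Each term has the shape $a^s\,V(s)^m$ for a constant $a>0$ (namely $a = \A(t)/\A(u)$ or $a = \A(t)e^{\sum x_n}$, both positive) and an integer exponent $m \ge 1$ (namely $m = \tau-u \ge 1$ since $u \le t-1 < \tau$, or $m = \tau \ge 1$). I would argue log-convexity: $\log\!\big(a^s V(s)^m\big) = s\log a + m\log V(s)$, which is convex in $s$ provided $\log V(s)$ is convex, since $s\log a$ is affine and $m \ge 0$. To see that $\log V(s)$ is convex, recall from \eqref{eq:MellinV(s)} that $V(s) = \big[\mathcal{M}_{\S}(1-s,u,\tau)\big]^{1/(\tau-u)} = \big(\E[\S^{-s}(u,\tau)]\big)^{1/(\tau-u)}$, i.e.\ up to a fixed positive power it is the Laplace-transform-type moment $\E[e^{-sX}]$ with $X = S(u,\tau) \ge 0$ the (non-negative) cumulative service. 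The map $s \mapsto \log \E[e^{-sX}]$ is convex — this is the standard convexity of the cumulant generating function, which follows from Hölder's inequality: for $\theta\in(0,1)$,
\[
\E\big[e^{-(\theta s_1 + (1-\theta)s_2)X}\big] \le \big(\E[e^{-s_1 X}]\big)^{\theta}\big(\E[e^{-s_2 X}]\big)^{1-\theta}.
\]
Hence $\log V(s)$ is convex, so $\log(a^s V(s)^m)$ is convex, so $a^s V(s)^m$ is log-convex, hence convex, on $(0,\infty)$. Summing these convex functions with the non-negative weights $c_u$ and $d_i$ gives that $\Phi(s)$ is convex, as claimed. (The differentiability hypothesis on $V(s)$ is only needed if one prefers the alternative route of checking $\Phi''(s)\ge 0$ directly; the Hölder argument does not actually require it, but I would state the lemma as given.)

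The main obstacle — really the only subtlety — is justifying the convexity of $\log V(s)$ rigorously, i.e.\ recognizing $V(s)$ as (a power of) a moment of a non-negative random variable and invoking Hölder, while being careful that the expectations in question are finite on the relevant range of $s$ (which is implicitly assumed throughout the paper whenever the Mellin transforms are used, cf.\ \eqref{eq:Mellin_Definition}). Once that is in hand, everything else is bookkeeping: the binomial coefficients are $s$-free and non-negative, the prefactors $a^s$ are log-affine, the exponents $m$ are non-negative integers, and convexity is preserved under the finite non-negative combination that defines $\Phi(s)$. If instead one wants the purely calculus-based proof, the hard part becomes showing each summand $g(s) = a^s V(s)^m$ satisfies $g g'' \ge (g')^2$ plus the right sign — which is exactly the statement that $\log g$ is convex, so one is led back to the same inequality; I would therefore present the Hölder / log-convexity argument as the clean path.
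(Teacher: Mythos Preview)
Your proof is correct and takes a genuinely different route from the paper. The paper argues directly via the second derivative: it first shows that each factor $[\A(t)/\A(u)]^{s}$ and $V^{\tau-u}(s)$ is separately convex (the latter by applying Jensen to $e^{-sS(u,\tau)}$), then expands $(fg)'' = f''g + fg'' + 2f'g'$ and asserts that all three terms are non-negative. You instead establish \emph{log}-convexity of each summand $a^{s}V(s)^{m}$ by recognising $V(s)^{m}=\E[e^{-sS(u,\tau)}]$ as a Laplace-type moment and invoking H\"older's inequality, so that convexity of $\Phi$ follows without ever differentiating.

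Your route is both cleaner and more robust. It does not actually need the differentiability hypothesis on $V$, and it sidesteps a delicate point in the paper's calculation: since $S(u,\tau)\ge 0$, the map $s\mapsto V^{\tau-u}(s)=\E[e^{-sS(u,\tau)}]$ is non-increasing, while $s\mapsto[\A(t)/\A(u)]^{s}$ is non-decreasing (because $\A(t)\ge\A(u)$), so the cross term $2f'g'$ is in fact $\le 0$, contrary to the paper's assertion that ``their first derivatives are also non-negative.'' Log-convexity is exactly the statement $f''g+fg''+2f'g'\ge 0$ for the product, and your H\"older argument delivers this inequality directly, so your approach not only differs from but also repairs the paper's argument.
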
}
\begin{proof}
The proof is given in the Appendix C.
\end{proof}

{\color{black} We note that stationary bound givem in Theorem~\ref{thm:stationary} is known to be a convex optimization problem,~\cite{Petreska_16} and the objective function is in closed form. Therefore, WTB has a factor of $O(T+N)$ higher computational complexity compared with the stationary bound. This increase in computational complexity can be attributed to the fact that WTB does not restrict the sequence of arrivals and it carefully incorporates the initial backlog of each node. Finally, a similar analysis shows that the SOTAT bound has a factor of $O(T)$ higher computational complexity compared with the stationary bound.}
\begin{figure*}[!t]
\normalsize
\begin{align}\label{eq:NHopBound_delayedBacklog}
&\Phi_d(s) = V^{\tau}(s)\left[\binom{N +\tau - 2}{\tau - 1} \cdot \sum_{u=1}^{t-1} [\A'(t)/\A'(u)]^{s} V^{-u}(s) + \sum_{i=0}^{N-1} \binom{i +\tau - 1}{\tau - 1}[\A'(t)]^s e^{s\sum_{n=1}^{N-i} x_n(t_0-d)}\right].
\end{align}
\hrulefill
\vspace*{4pt}
\end{figure*}

{\color{black} \subsection{Delayed Backlog Information}} \label{subsec:backlog}
To compute WTB, node $1$ requires to know the backlogs of all nodes at time $t_0$. However, in practical systems, the current backlog information at node $n > 1$ may not be known to node $1$ instantaneously, instead it may be received with some delay due to the time it takes to rely this information back to node 1. For example, if a node relays its current backlog and the backlogs of the successor nodes to its predecessor node with a delay of one time slot, then the backlog at node $n > 1$ at time $t_0$ will be received by node $1$ at time $t_0+n-1$. In this case, at time $t_0$ node $1$ will have the backlog information of all the $N$ nodes at time $t_0-d$, where $ d = N - 1$. To incorporate this delayed backlog information in our transient analysis, we extend the delay bound for the case where at time $t_0$ node $1$ is only aware of the initial backlog vector at time $t_0-d$, which we refer to by $\mathbf{x}(t_0-d)$, for some $d > 0$.

To find the delay bound, we reuse the model in Figure~\ref{fig:SysDesc} and consequently apply Theorem~\ref{thm:NHop}. Recall that in the interval $[t_0-d,t_0-1]$ the arrivals are due to the stationary arrival process which we refer to as the overhead traffic $\A^\text{o}(t_0-d,t)$, for $t \in [t_0-d,t_0-1]$, in the following.
We define a cumulative arrival process $\A'(t)$ as follows.
\begin{eqnarray*}
\A'(t) = \left\{\begin{array}{lc}
	\! \A^\text{o}(t_0-d,t) \quad \quad \quad \quad \quad \quad \quad \quad \, t_0 - d \leq t \leq t_0,\\
    \! \A^\text{o}(t_0-d,t_0) \cdot \A(t_0,\min(t,t_0 + T + 1)) \, \,  t > t_0.
  \end{array}\right.
\end{eqnarray*}
Note that $\A'(t)$ only accounts for arrivals until time $t_0+T$, the time at which the time-critical sequence ends. 
Now, given $\mathbf{x}(t_0-d)$ the arrivals that occurred before $t_0-d$ can be ignored and we may consider that the system started at time $t_0-d$ with initial backlog $\mathbf{x}(t_0-d)$. This is equivalent to the model in Figure~\ref{fig:SysDesc}, except that the starting time is $t_0-d$ and the arrival sequence of our interest starts at $t_0$ instead of $t_0-d$. Note that our analysis that lead to the derivation of WTB is independent of the starting time, but depends on the cumulative arrival process since the starting time and the initial backlogs at the nodes. Thus, for the system that starts at time $t_0-d$, it is easy to see that Theorem~\ref{thm:NHop} can be applied using the cumulative arrival process $\A'(t)$ and $\mathbf{x}(t_0-d)$. This result is stated in the following theorem without proof.
\begin{theorem}\label{thm:delayedbacklog}
When the sequence $\A$ traverses the N-hop wireless network in Figure~\ref{fig:SysModel} and given the delayed initial backlog vector $\mathbf x(t_0-d)$, we compute  $\P\{\W(t) > w\} \leq \min_{s> 0}\; \Phi_d(s)$, where $\Phi_d(s)$ is given by~\eqref{eq:NHopBound_delayedBacklog}.
\end{theorem}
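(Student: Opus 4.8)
The plan is to reduce Theorem~\ref{thm:delayedbacklog} to a direct invocation of Theorem~\ref{thm:NHop} via a time shift. First I would observe that, once the delayed backlog $\mathbf{x}(t_0-d)$ is given, all arrivals prior to $t_0-d$ may be absorbed into this initial state, so the system is statistically identical to the equivalent model of Figure~\ref{fig:SysModel} but with the clock reset so that the origin is $t_0-d$ rather than $0$. Under this reset, the cumulative arrival process seen from the origin is exactly $\A'(\cdot)$: it equals the overhead traffic $\A^\text{o}(t_0-d,t)$ on $[t_0-d,t_0]$ and then appends the time-critical burst $\A(t_0,\cdot)$, truncated at $t_0+T$ so that no further increments occur — which is precisely the finite-horizon property for which Theorem~\ref{thm:NHop} is stated. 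The per-node initial states become $\mathbf{x}(t_0-d)$, dimensioned as cross-traffic burst functions via~\eqref{eq:InitBacklog} exactly as before.

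Next I would check that the virtual delay $\W(t)$ of the time-critical sequence, for $t\in[t_0,t_0+T]$, is the same quantity in the original and in the reset model. This follows from the FCFS discipline together with the cross-traffic construction in~\eqref{eq:InitBacklog}: folding the pre-$(t_0-d)$ workload into the burst functions $A^c_n$ does not alter the order in which bits are served, hence does not alter the departure epochs of the bits of interest, and therefore leaves $\W(t)=\inf\{w\ge 0 : \A(t)\prod_{n=1}^N e^{x_n(t_0-d)} \le \D(t+w)\}$ unchanged. At this point the hypotheses of Theorem~\ref{thm:NHop} are met with $\A$ replaced by $\A'$ and $\mathbf{x}$ replaced by $\mathbf{x}(t_0-d)$.

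Finally I would transcribe the conclusion of Theorem~\ref{thm:NHop} under this relabeling. Every step of its proof in Appendix~A — the unfolding of the \mx-convolution from link $N$ downward, the union bound over the $t$ split points, and the moment bound — uses only the cumulative arrival process measured from the start time and the per-node initial backlogs, and is insensitive to the numerical value of that start time; the same derivation therefore yields $\P\{\W(t)>w\}\le \min_{s>0}\Phi_d(s)$ with $\Phi_d$ as in~\eqref{eq:NHopBound_delayedBacklog}. The one point needing care is the overhead segment: since $\A^\text{o}$ is the stationary pre-injection traffic, independent of the wireless service increments, the Mellin-transform / moment-bound step factorizes exactly as in Theorem~\ref{thm:NHop}, so the $[\A'(t)/\A'(u)]^{s}$ and $[\A'(t)]^{s}$ factors land in the same positions as their deterministic counterparts did in~\eqref{eq:NHopBound}. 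I expect this bookkeeping around the overhead traffic — making precise that treating $\A'$ as the effective cumulative arrival process over the reset horizon is legitimate — to be the only real obstacle; the remainder is a relabeling of Theorem~\ref{thm:NHop}, which is why the result is stated without a separate proof.
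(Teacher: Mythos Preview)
Your proposal is correct and mirrors the paper's own justification: the paper states Theorem~\ref{thm:delayedbacklog} without a separate proof, arguing exactly as you do that shifting the start time to $t_0-d$, absorbing prior arrivals into the initial backlog $\mathbf{x}(t_0-d)$, and taking $\A'$ as the cumulative arrival process places the system back in the setting of Theorem~\ref{thm:NHop}. Your additional remarks on verifying that $\W(t)$ is unchanged under the reset and on the factorization of the moment bound for the overhead segment are sound elaborations of points the paper leaves implicit.
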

Note that the delay bound in Theorem~\ref{thm:delayedbacklog} is loose compared to the delay bound where the initial backlogs at time $t_0$ are known, i.e., $\mathbf{x}(t_0)$ is known to node $1$. The difference between these bounds depends on the information delay $d$ of the backlog measurements  and on the overhead traffic $\A^\text{o}$ intensity.

\section{Numerical Analysis}\label{sec:numerical}
In this section, we present a numerical evaluation of the proposed bounds and compare them to existing bounds and  simulation results. More specifically, we first present a comparison of the proposed WTB bound with the stationary bound and the SOTAT bound. 
We then evaluate the tightness of WTB in comparison with the violation probability obtained using simulation. 
{\color{black} Throughout the section, we assume Rayleigh bolck-fading channel model for the links and use the corresponding service process given in~\eqref{eq:Rayleigh-Service-Process}.} We also use a base set of parameters as follows: slot duration of $1$ ms, $\rho = 25, \sigma = 25$, bandwidth $W = 20$ kHz, average SNR $= 5$ dB, and initial backlog of $100$ bits, which is equally distributed along a multi-hop route (whenever  a multi-hop route is considered). 
Note that with an average SNR of $5$ dB, the average service rate (assuming Shannon capacity and a bandwidth of $20$ kHz) amounts to about $34$ bits per time slot.
Thus, by setting $\rho = 25$, the system basically operates in a transient regime where asymptotically it becomes stable.
We consider two types of arrival processes: (1) a burst arrival with $T=1, \sigma = 25, \rho=0$, modelling a single control packet, with $25$ bits, passed to the network at $t_0$; (2) an arrival process over multiple time slots with $T = 5, \sigma = 0, \rho = 25$, i.e. a train of packets that may represent a sequence of sensor data triggered through an event-based system.
The numerical bounds are computed using MATLAB and the Discrete Event Simulation is done using C.

\subsection{Comparison of Upper Bounds}
{\color{black} Recall that the stationary bound cannot be directly applied to the problem at hand. We use it as a reference by assuming the arrival process occurs over infinite time horizon and accordingly set the corresponding parameters in the bound.}
We start our numerical investigation by considering the burst arrival ($T=1$) model and a single wireless link with and without initial backlog.
The corresponding bounds and the simulation are presented in Figure~\ref{fig:singleLink_burst25_varx_5dB} where we plot the delay violation probability versus an increasing delay target $w$.
We observe that both WTB and SOTAT are significantly lower than the stationary bound. 
Note that the proposed WTB is not significantly lower than the SOTAT bound for such simple case of burst arrival. 
We also note that both WTB and SOTAT capture the tail decay rate of the delay distribution while the stationary bound is drastically off.
In Figure~\ref{fig:singleLink_rho25_x100_t5_5dB}, we consider the packet train arrival model with $T=5$ for the same single link system with an initial backlog of $100$, considering again the violation probability over an increasing delay target.
The figure reveals that in this case WTB outperforms the SOTAT bound over the entire range of delay target values by about one order of magnitude. 
In comparison to the simulated system performance, the proposed WTB is still about one order of magnitude higher, with a larger gap for longer delay targets.

Furthermore, in Figure~\ref{fig:20180519_TwoHop_rho25_x100_t5_varSNR}, we extend the scenario to a two-hop wireless system while considering the packet train arrival  with $T=5$.
The plot shows a comparison for two cases of average SNR: $5$ dB and $10$ dB. 
We observe that for the two-hop network the SOTAT bound performs worse than that for a one hop case. 
In particular for an average SNR of $10$ dB, i.e., at lower utilization ($43\%$), the proposed WTB bound is tighter by two orders of magnitude compared with the SOTAT bound. 
The relevance of this improvement is illustrated in Figure~\ref{fig:20180521_TwoHop_x100_rho25_T5_SNRvsw}.
{\color{black}Here we plot the predicted SNR requirement as computed using different bounds, for a QoS requirement of violation probability smaller than $10^{-9}$ for varying delay target $w$ in a two-hop network. 
We observe that for all delay target values, the proposed WTB bound results in a significantly lower average SNR requirement per link than the two comparison schemes. 
In other words, the proposed WTB bound provides a much better starting point, for instance, for accurate channel adaptation for mission-critical data transmissions in the short-term regime.
In summary, the results above demonstrate that the proposed WTB bound significantly outperforms the SOTAT and stationary bounds under general settings involving multiple hops, large initial backlogs and different utilizations. Therefore, we conclude that the SOTAT bound that is derived directly using the results from stationary analysis is inadequate for transient analysis and the additional computational complexity needed for WTB is well justified.   
\begin{figure}[t]
\centering
\includegraphics[width = 3in]{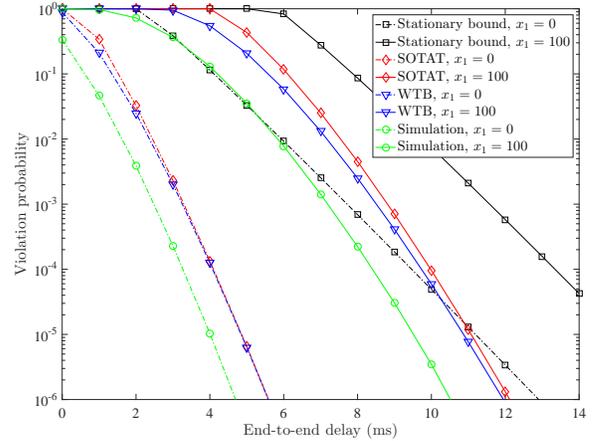}
\caption{Delay violation probability versus end-to-end delay for a single link with burst arrival ($T = 1$), SNR $= 5$ dB, $\rho = 0$, and $\sigma = 25$.}
\label{fig:singleLink_burst25_varx_5dB}
\end{figure}

\begin{figure}[t]
\centering
\includegraphics[width = 3in]{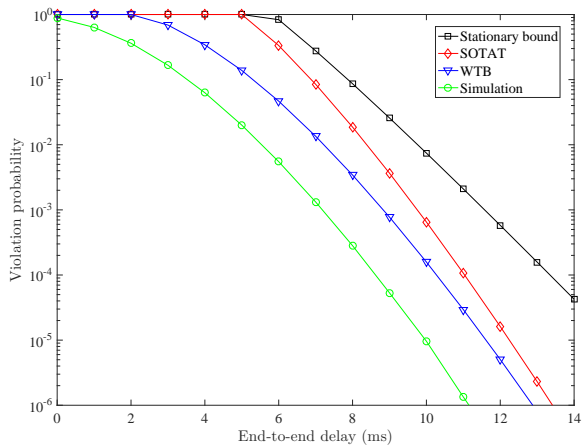}
\caption{Delay violation probability versus end-to-end delay for a single link with the packet train arrival process ($T = 5$), SNR $= 5$ dB, $x_1 = 100, \rho = 25$ and $\sigma = 0$.}
\label{fig:singleLink_rho25_x100_t5_5dB}
\end{figure}

\begin{figure}[t]
\centering
\includegraphics[width = 3in]{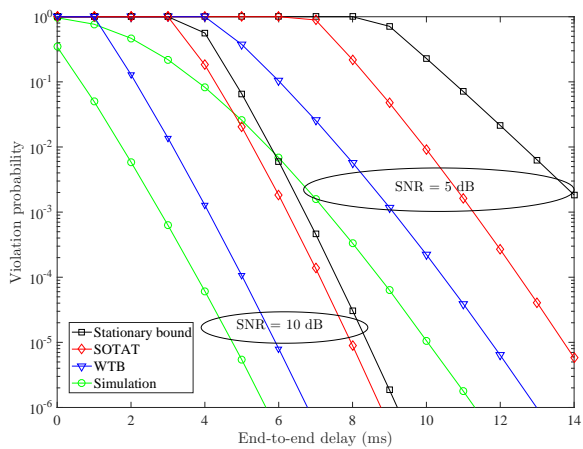}
\caption{Delay violation probability versus end-to-end delay for a two-hop network with the packet train arrival process ($T = 5$), $x_n = 100, \rho = 25$ and $\sigma = 0$.}
\label{fig:20180519_TwoHop_rho25_x100_t5_varSNR}
\end{figure}

\begin{figure}[t]
\centering
\includegraphics[width = 3in]{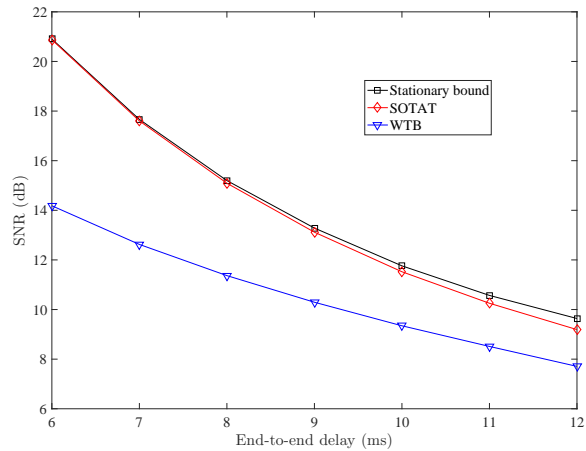}
\caption{SNR versus end-to-end delay for a delay violation probability requirement of $10^{-9}$ for a two-hop network with the packet train arrival process ($T = 5$), $x_n = 100, \rho = 25$ and $\sigma = 0$.}
\label{fig:20180521_TwoHop_x100_rho25_T5_SNRvsw}
\end{figure}


\subsection{Evaluation of WTB}
In the previous subsection, we have seen that WTB is typically within one order of magnitude of the simulated violation probability. 
In this subsection, we further investigate its performance for different parameter settings, and concentrate explicitly on comparing it with the simulated system behavior. 
In Figures~\ref{fig:TwoHop_rho25_x100_t5_varSNR} and~\ref{fig:TwoHop_rho25_Varx_t5_5dB}, we present performance results by varying the average SNR and the total initial backlog, respectively, in a two-hop network with the packet train arrival process. 
These results confirm that for a two-hop network, the gap between the proposed WTB bound and the simulated system performance remains around one order of magnitude despite significant variations in the average link SNRs or the initial backlog of the system.

In Figure~\ref{fig:ThreeHop_vart_x100_rho25_5dB}, we present the comparison for a three-hop network with burst arrival and train arrival processes. 
In this case, we observe that the gap increases significantly beyond one order of magnitude for the considered arrival process types. 
We expect this trend to continue as the number of hops increase. 
However, from all the figures above, we infer that WTB has a decay rate that always matches closely with the decay rate of the simulated violation probability. 
The significance of this property is that, an optimization of the proposed WTB bound can yield good heuristic solutions for the optimization of the end-to-end delay in the network operating in transient state.

\begin{figure}[t]
\centering
\includegraphics[width = 3in]{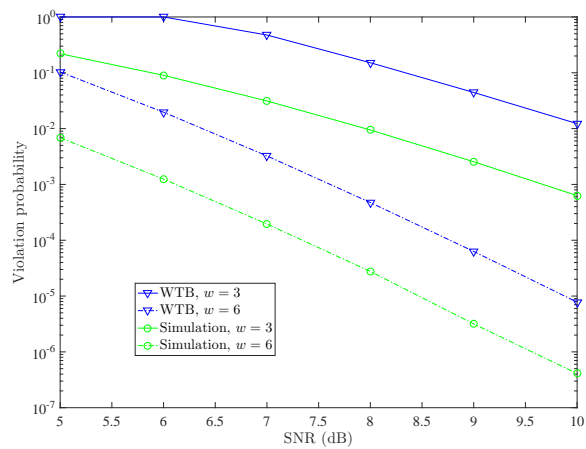}
\caption{Delay violation probability versus average SNR for a two-hop network for different target delays $w$, assuming the packet train arrival process $T = 5$, $x_n = 50, \rho = 25$, and $\sigma = 0$.}
\label{fig:TwoHop_rho25_x100_t5_varSNR}
\end{figure}

\begin{figure}[t]
\centering
\includegraphics[width = 3in]{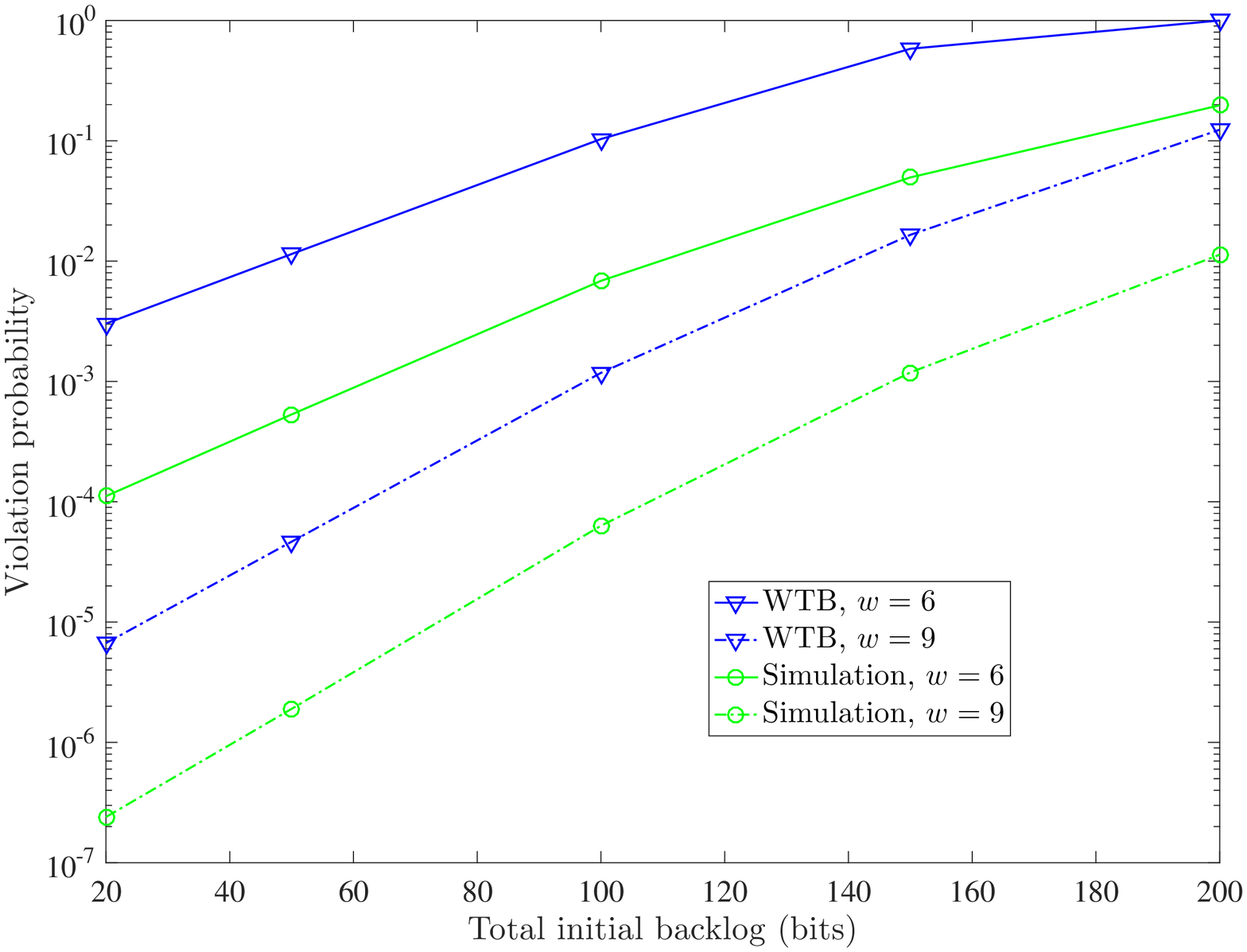}
\caption{Backlog violation probability versus total initial backlog for a two-hop network for different target delays $w$, assuming the packet train arrival process $T = 5$, $\rho = 25$, $\sigma = 0$, and SNR $= 5$ dB.}
\label{fig:TwoHop_rho25_Varx_t5_5dB}
\end{figure}

\begin{figure}[t]
\centering
\includegraphics[width = 3in]{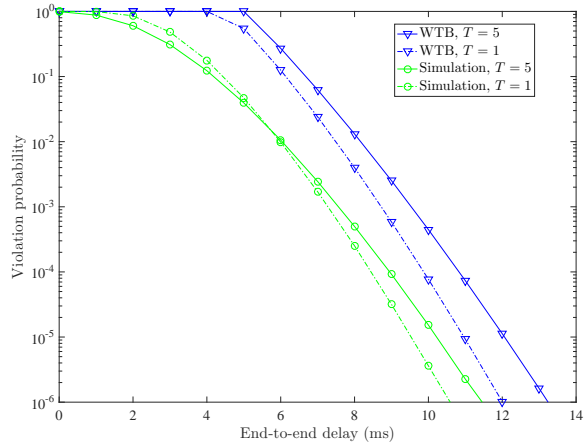}
\caption{Delay violation probability versus end-to-end delay for a three-hop network for different $T$, $x_n = 33, \rho = 25, \sigma = 0$, and SNR $= 5$ dB.}
\label{fig:ThreeHop_vart_x100_rho25_5dB}
\end{figure}

{\color{black}\subsubsection*{Delayed Backlog Information}
Finally, in Figures~\ref{fig:ThreeHop_DelayedBacklog_x100_rho25_5dB} and~\ref{fig:ThreeHop_DelayedBacklog_vard_x100_rho25_5dB} we study the impact of delayed backlog information on the predicted system behavior as captured by the bound presented in Section \ref{subsec:backlog}. 
In these figures we use the packet train arrival model while varying on the x-axis the target delay $w$ as well as backlog information delay $d$. 
From Figure \ref{fig:ThreeHop_DelayedBacklog_x100_rho25_5dB}, we observe similar trends as before for varying $w$, but recognize instantly a cost of the outdated backlog information of between a factor of 2 up to 5.
In Figure~\ref{fig:ThreeHop_DelayedBacklog_vard_x100_rho25_5dB} we observe that - in correspondence to the previous figure - the WTB bound becomes loose as $d$ increases by the same ratios as observed previously. 
This is expected and is a consequence of using union bound, in which the number of terms increase as $d$ increases.}
\begin{figure}[t]
\centering
\includegraphics[width = 3in]{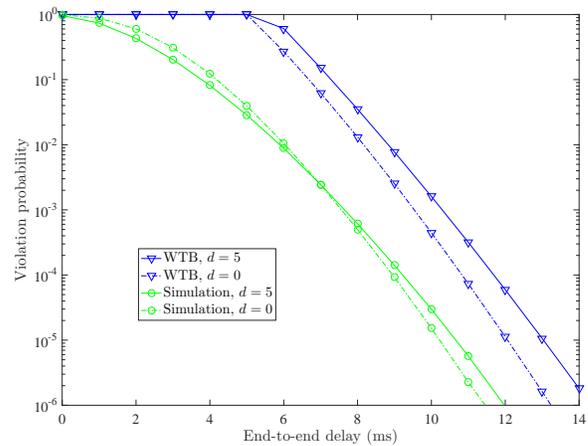}
\caption{Delay violation probability versus end-to-end delay for a three-hop network for different $d$, assuming the packet train arrival process with $T = 5, x_n = 33, \rho = 25, \sigma = 0$, and SNR $= 5$ dB.}
\label{fig:ThreeHop_DelayedBacklog_x100_rho25_5dB}
\end{figure}

\begin{figure}[t]
\centering
\includegraphics[width = 3in]{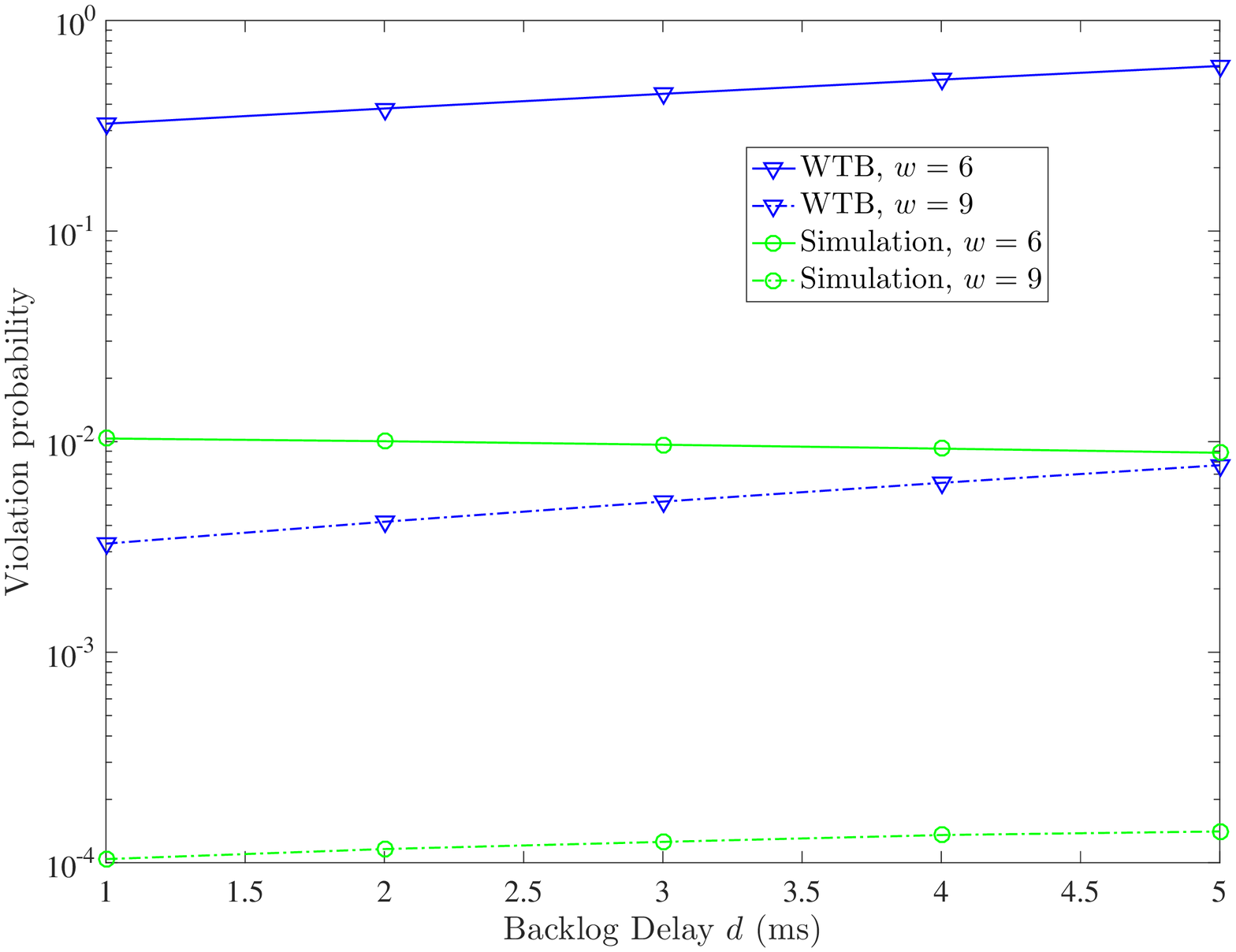}
\caption{Delay violation probability versus backlog information delay (d) for a three-hop network for different $w$, assuming the packet train arrival process with $T = 5, x_n = 33, \rho = 25, \sigma = 0$, and SNR $= 5$ dB.}
\label{fig:ThreeHop_DelayedBacklog_vard_x100_rho25_5dB}
\end{figure}

\section{Conclusions}
\label{sec:conclusions}
We have studied the problem of characterizing the end-to-end delay of a sequence of time-critical control packets traversing through a multi-hop wireless network with non-zero initial backlog at each hop. As this requires the network to be analysed in the transient state, we attempt to find upper bounds for the end-to-end delay using stochastic network calculus. We have studied the state-of-the-art upper bounds and have demonstrated their poor performance for the problem at hand. We have derived WTB by using the first principles of network calculus and the state-of-the-art bounding techniques. {\color{black} A key aspect of WTB is that it carefully incorporates the known initial backlog in the network. We also extended WTB for the case where the initial backlog information is delayed.} Through extensive simulations we have showed that WTB is significantly better than the alternatives. Also, we have observed that its decay rate closely matches the decay rate of the simulated violation probability. {\color{black} We believe that this key feature of WTB makes it a useful metric in the design and optimization of the networks for safety-critical machine-type applications.}

\bibliographystyle{IEEEtran}
\bibliography{Transient-Analysis-v2}

\appendix

\subsection{Proof of Theorem~\ref{thm:NHop}}

\begin{figure*}[!t]
\normalsize
\setcounter{equation}{20}
\begin{align}\label{eq:NHop2}
&\P\{\W(t) > w\}  
\leq \sum_{i=0}^{N-1}\sum_{u=1}^{\tau}\sum_{u_1=1}^{u}\ldots\sum_{u_{i-1}=1}^{u_{i-2}} \P\Big \{\S_{N-i} (u_{i-1})\cdot \Pi_{n=1}^{i-1} \S_{N-n}(u_{n-1}-u_{n}) \cdot \S_N(\tau-u) < \A(t) \cdot e^{\sum_{n=1}^{N-i} x_n}\Big \} \nonumber\\ & \quad \quad \quad \quad \quad \quad \quad \quad + \sum_{u=1}^{\tau}\sum_{u_1=1}^{u}\ldots\sum_{u_{N-1}=1}^{u_{N-2}} \P\Big \{ \A(u_{N-1})\cdot\Pi_{n=1}^{N-1} \S_{N-n}(u_{n-1}-u_{n})\cdot \S_N(\tau-u) < \A(t)\Big \}.
\end{align}
\hrulefill
\vspace*{4pt}
\end{figure*}

\begin{figure*}[!t]
\normalsize
\begin{align}\label{eq1:NHop1_firstterm}
& \sum_{i=0}^{N-1}\sum_{u=1}^{\tau}\sum_{u_1=1}^{u}\ldots\sum_{u_{i-1}=1}^{u_{i-2}} \P\{\S_{N-i} (u_{i-1})\cdot\Pi_{n=1}^{i-1} \S_{N-n}(u_{n-1}-u_{n})\cdot \S_N(\tau-u) < \A(t) \cdot e^{\sum_{n=1}^{N-i} x_n}\}\nonumber \\
& \leq \sum_{i=0}^{N-1}\sum_{u=1}^{\tau}\sum_{u_1=1}^{u}\ldots\sum_{u_{i-1}=1}^{u_{i-2}}\min_{s > 0} \; [\A(t)]^s \cdot e^{s(\sum_{n=1}^{N-i} x_n)} \mathbb{E}[\{\S_{N-i} (u_{i-1})\cdot\Pi_{n=1}^{i-1} \S_{N-n}(u_{n-1}-u_{n})\cdot \S_N(\tau-u)\}^{-s}]\nonumber \\
& \leq \min_{s > 0}\; \sum_{i=0}^{N-1} [\A(t)]^s \cdot e^{s\sum_{n=1}^{N-i} x_n}V^{\tau}(s)\sum_{u=1}^{\tau}\sum_{u_1=1}^{u}\ldots\sum_{u_{i-1}=1}^{u_{i-2}} 1 = \min_{s > 0}\; \sum_{i=0}^{N-1} \binom{i +\tau - 1}{\tau - 1} [\A(t)]^s \cdot e^{s\sum_{n=1}^{N-i} x_n}V^{\tau}(s).
\end{align}
\hrulefill
\vspace*{4pt}
\end{figure*}

\begin{figure*}[!t]
\normalsize
\begin{align}\label{eq1:NHop1_secondterm}
&\sum_{u=1}^{\tau}\sum_{u_1=1}^{u}\ldots\sum_{u_{N-1}=1}^{u_{N-2}} \P\{ \A(u_{N-1})\cdot  \Pi_{n=1}^{N-1} \S_{N-n}(u_{n-1}-u_{n})\cdot \S_N(\tau-u) < \A(t)\} \nonumber \\
&\leq \sum_{u=1}^{\tau}\sum_{u_1=1}^{u}\ldots\sum_{u_{N-1}=1}^{t-1} \P\{ \A(u_{N-1})\cdot  \Pi_{n=1}^{N-1} \S_{N-n}(u_{n-1}-u_{n})\cdot \S_N(\tau-u) < \A(t)\} \nonumber \\
& = \sum_{u_{N-1}=1}^{t-1} \P\{ \A(u_{N-1})\cdot  \Pi_{n=1}^{N-1} \S_{N-n}(u_{n-1}-u_{n})\cdot \S_N(\tau-u) < \A(t)\} \sum_{u=1}^{\tau}\sum_{u_1=1}^{u}\ldots\sum_{u_{N-2}=1}^{u_{N-3}} 1 \nonumber \\
& \leq \binom{N +\tau - 2}{\tau - 1} \sum_{u_{N-1}=1}^{t-1} \min_{s > 0}\; [\A(t)/\A(u_{N-1})]^{s}\mathbb{E}[\{\Pi_{n=1}^{N-1} \S_{N-n}(u_{n-1}-u_{n}) \cdot \S_N(\tau-u))\}^{-s}] \nonumber \\
& \leq \binom{N +\tau - 2}{\tau - 1} \cdot \min_{s > 0}\; \sum_{u=1}^{t-1} [\A(t)/\A(u)]^{s} V^{\tau-u}(s).
\end{align}
\hrulefill
\vspace*{4pt}
\end{figure*}

Recall that  $\D_{N}(t) \geq \A_{N}\otimes \S_{N}(t)$ and $\A_{N}(t) = \D_{N-1}(t) \cdot \A^c_N(t)$. Since the event $\{\W(t) > w\}$ is equivalent to the event that the cumulative departures at node $N$ at time $\tau$ is strictly less than the cumulative arrivals by time $t$ plus the total initial backlog $\sum_{n=1}^N x_n$, we have
{\allowdisplaybreaks
\begin{equation}\label{eq:VioProb_NHop1}
\begin{aligned}
& \P\{\W(t) > w\} = \P\{\D_N(\tau) < \A(t) \cdot e^{ \sum_{n=1}^N x_n }\} \\
& \leq \P\{\A_N \otimes \S_N(\tau)  < \A(t) \cdot e^{\sum_{n=1}^N x_n}\} \\
& = \P\{(\D_{N-1} \cdot \A^c_N)\otimes \S_N(\tau)  < \A(t) \cdot e^{\sum_{n=1}^N x_n}\} \\
& =  \P\Big\{\underset{0\leq u \leq \tau}{\min} [\D_{N-1}(u) \cdot  \A^c_N(u) \\
& \quad \quad \quad \quad \quad \quad\cdot \S_N(\tau-u)] < \A(t) \cdot e^{\sum_{n=1}^N x_n}\Big\}\\
& =  \P\Big\{\{\S_N(\tau) < \A(t) \cdot e^{\sum_{n=1}^N x_n}\} \\ &\quad\quad \cup (\bigcup_{u=1}^{\tau}\{\D_{N-1}(u)\cdot\S_N(\tau-u) < \A(t) \cdot e^{\sum_{n=1}^{N-1} x_n}\})\Big\}  \\
& \leq  \P\{\S_N(\tau) < \A(t) \cdot e^{\sum_{n=1}^N x_n}\} \\ &\quad\quad + \sum_{u=1}^{\tau}\P\{\D_{N-1}(u)\cdot \S_N(\tau-u) < \A(t) \cdot e^{\sum_{n=1}^{N-1} x_n}\}\}.
\end{aligned}
\end{equation}
}
In the following we find an upper bound for the probabilities in the summation term of the last step in~\eqref{eq:VioProb_NHop1}. 
{\allowdisplaybreaks
\begin{align}\label{eq:NHop1_secondterm}
&\P\{\D_{N-1}(u)\cdot \S_N(\tau-u) < \A(t) \cdot e^{\sum_{n=1}^{N-1} x_n}\} \nonumber\\
&\leq \medmath{ \P\{(\D_{N-2} \cdot \A^c_{N-1})\otimes \S_{N-1}(u)\cdot \S_N(\tau-u)  < \A(t) \cdot e^{\sum_{n=1}^{N-1} x_n}\} } \nonumber\\
& =  \P\Big\{\underset{0\leq u_1 \leq u}{\min} [\D_{N-2}(u_1) \cdot  \A^c_{N-1}(u_1) \nonumber  \\ 
& \quad\quad\quad \cdot \S_{N-1}(u-u_1) \cdot \S_N(\tau-u)] < \A(t) \cdot e^{\sum_{n=1}^{N-1} x_n}\Big\}\nonumber\\
& \leq \P\{\S_{N-1}(u) \cdot \S_N(\tau-u) < \A(t) \cdot e^{ \sum_{n=1}^{N-1} x_n}\} + \nonumber\\ &   \medmath{ \sum_{u_1=1}^{u} \P\{\D_{N-2}(u_1)\cdot \S_{N-1}(u-u_1)\cdot \S_N(\tau-u) < \A(t) \cdot e^{\sum_{n=1}^{N-2}x_n}\} }.
\end{align}
}
Substituting~\eqref{eq:NHop1_secondterm} in~\eqref{eq:VioProb_NHop1}, we obtain
\begin{align*}
& \P\{\W(t) > w\} \nonumber \\ 
&\leq \P\{\S_N(\tau) < \A(t) \cdot e^{\sum_{n=1}^N x_n}\} + \nonumber \\  & \quad  \P\{\S_{N-1}(u) \cdot \S_N(\tau-u) < \A(t)\cdot e^{ \sum_{n=1}^{N-1} x_n}\} + \nonumber \\
& \quad  \medmath{ \sum_{u=1}^{\tau} \sum_{u_1=1}^{u}\hspace{-.15cm} \P\Big \{\D_{N-2}(u_1) \S_{N-1}(u-u_1) \S_N(\tau-u)\hspace{-.1cm} < \hspace{-.1cm} \A(t) \cdot e^{\sum_{n=1}^{N-2}x_n}\Big \} }.
\end{align*}

One can again use similar manipulation as in~\eqref{eq:NHop1_secondterm} to bound the probabilities in the double summation of the RHS of the above inequality. 
Repeating this step iteratively, and using the convention $u_0 = u$ and $u_{-1} = \tau$, we arrive at~\eqref{eq:NHop2}. The first and second terms in the RHS of~\eqref{eq:NHop2} are upper bounded as shown in~\eqref{eq1:NHop1_firstterm} and~\eqref{eq1:NHop1_secondterm}, respectively. 
{\color{black} In the first inequality of~\eqref{eq1:NHop1_firstterm} we have used the moment bound and in the second inequality we have used the fact}
\begin{align*}
\sum_{u=1}^{\tau}\sum_{u_1=1}^{u}\ldots\sum_{u_{i-1}=1}^{u_{i-2}} 1 = \binom{i +\tau - 1}{\tau - 1}.
\end{align*}
{\color{black} In the first inequality of~\eqref{eq1:NHop1_secondterm}, we have used the fact that the probability terms are zero for $u_{N-1} \geq t$.}

Finally, substituting~\eqref{eq1:NHop1_firstterm} and~\eqref{eq1:NHop1_secondterm} in~\eqref{eq:NHop2}, we obtain the result.

{\color{black}\subsection{Proof of Corollary~\ref{lem:backlog}}}

\begin{align}\label{eq:backlog_derivation}
&\P\{\B(t) > e^x\}\nonumber = \P\{\A(t) \cdot e^{\sum_{n=1}^N x_n} / \D(t) > e^x \} \nonumber \\
&=\P\{\D(t) < \A(t)\cdot e^{\sum_{n=1}^N x_n - x} \} 
\end{align}

Note that the expression in~\eqref{eq:backlog_derivation} is same as the expression in the first step of~\eqref{eq:VioProb_NHop1}, except for the additional factor $e^{-x}$. Therefore, we repeat the same steps of the proof of Theorem~\ref{thm:NHop} and arrive at the desired result. 

{\color{black} \subsection{Proof of Lemma~\ref{lem:convexity}}}
Since sum of convex functions is convex, it is sufficient to prove that the terms $[\A(t)/\A(u)]^{s} V^{\tau-u}(s)$ and $V^{\tau}e^{s\sum_{n=1}^{N-i} x_n}$ are convex. We will show that $[\A(t)/\A(u)]^{s} V^{\tau-u}(s)$ is convex and proof for the latter term is similar. In the following we first show that $V^{\tau-u}(s)$ is convex. From~\eqref{eq:MellinV(s)}, we have $V^{\tau-u}(s) = \mathbb{E}[\S^{-s}(u,\tau)]$. Therefore, for any two positive real numbers $s_1$ and $s_2$, and $0 \leq \theta \leq 1$, we have
{\allowdisplaybreaks
\begin{align*}
& V^{\tau-u}(\theta s_1 + (1-\theta) s_2) = \mathbb{E}[\S^{-(\theta s_1 + (1-\theta) s_2)}(u,\tau)]\\
&= \mathbb{E}[e^{-(\theta s_1 + (1-\theta) s_2) S(u,\tau)}]\\
&\leq \mathbb{E}[\theta e^{-s_1 S(u,\tau)} + (1-\theta)e^{-s_2 S(u,\tau)}]\\
&= \theta \mathbb{E}[\S^{-s_1}(u,\tau)] + (1-\theta)\mathbb{E}[\S^{-s_2}(u,\tau)]\\
&= \theta \mathbb{E}[\S^{-s_1}(u,\tau)] + (1-\theta)\mathbb{E}[\S^{-s_2}(u,\tau)]\\
&= \theta V^{\tau-u}(s_1) + (1-\theta)V^{\tau-u}(s_2).
\end{align*}
}
In the third step above we have used the fact that $e^{-sS(u,\tau)}$ is convex in $s$, for $s > 0$. Therefore, $V^{\tau-u}(s)$ is convex. It is easy to see that $[\A(t)/\A(u)]^{s}$ is convex by showing it second derivative is positive. 

Now, we show that the second derivative of $[\A(t)/\A(u)]^{s} V^{\tau-u}(s)$ is non-negative. Using chain rule iteratively, we obtain
\begin{align*}
& \frac{d^2 ([\A(t)/\A(u)]^{s} V^{\tau-u}(s))}{d s^2} = [\A(t)/\A(u)]^{s}\frac{d^2 (V^{\tau-u}(s))}{d s^2} \\
&  + V^{\tau-u}(s))\frac{d^2 ([\A(t)/\A(u)]^{s}}{d s^2} + 2\frac{d (V^{\tau-u}(s))}{d s} \frac{d ([\A(t)/\A(u)]^{s}}{d s}.
\end{align*}
The first two terms in the RHS above are non-negative as $[\A(t)/\A(u)]^{s}$ and $V^{\tau-u}(s)$ are convex. Further, both are single variable functions and are differentiable. Therefore, their first derivatives are also non-negative and hence the third term in the RHS is also non-negative. Thus, $[\A(t)/\A(u)]^{s} V^{\tau-u}(s)$ is convex.

\end{document}